\documentclass[12pt]{article}

\RequirePackage{amsmath, amsthm, amsfonts, graphicx, bm, natbib}
\RequirePackage{setspace, microtype, dsfont, subcaption, appendix}
\RequirePackage{multirow, colortbl}
\RequirePackage{hyperref}
\RequirePackage[raggedright]{titlesec}
\RequirePackage{tikz}
\RequirePackage{enumitem}
\usetikzlibrary{arrows}

\DisableLigatures[f]{encoding=*, family=*}

\numberwithin{equation}{section}
\numberwithin{figure}{section}
\numberwithin{table}{section}

\providecommand{\keywords}[1]{\textbf{Keywords:} #1}

\addtolength{\textwidth}{1.2in}
\addtolength{\oddsidemargin}{-0.5in}
\addtolength{\textheight}{1.6in}
\addtolength{\topmargin}{-0.8in}

\newtheorem{lemma}{Lemma}

\def\by{\bm{y}}
\def\bx{\bm{x}}
\def\bz{\bm{z}}
\def\bB{\bm{B}}
\def\bu{\bm{u}}
\def\bR{\bm{R}_{\beps}}
\def\Leps{\bm{\mathcal{L}}_{\beps}}

\def\eps{\epsilon}
\def\d{\delta}

\def\m{\mu}
\def\k{\kappa}
\def\s{\sigma}
\def\l{\lambda}
\def\a{\alpha}
\def\b{\beta}
\def\vareps{\varepsilon}
\def\ba{\boldsymbol \alpha}
\def\beps{\boldsymbol \epsilon}
\def\bmu{\boldsymbol \mu}
\def\bt{\boldsymbol \theta}
\def\bb{\boldsymbol \beta}
\def\Sig{\boldsymbol \Sigma}
\def\bL{\boldsymbol \Lambda}
\def\vt{\boldsymbol \vartheta}
\def\bSigeps{\Sig_{\beps}}
\def\N{\mathcal{N}}
\def\diag{\textrm{diag}}
\def\vechL{\textrm{vechL}}
\def\E{\mathbb{E}}
\def\P{\mathbb{P}}

\title{\bf Efficient data augmentation for multivariate probit models with panel data: An application to general practitioner decision-making about contraceptives}

\author{Vincent Chin\footnote{Communicating author: \href{mailto:vincent.chin@student.unsw.edu.au}{\tt vincent.chin@student.unsw.edu.au}} \thanks{Australian Research Council Centre of Excellence for Mathematical \& Statistical Frontiers, The University of Melbourne, Victoria 3010, Australia.} \thanks{School of Mathematics and Statistics, University of New South Wales, Sydney 2052, Australia.} ,
David Gunawan\footnotemark[2] \thanks{School of Economics, Australian School of Business, University of New South Wales, Sydney 2052, Australia.} ,
Denzil G. Fiebig\footnotemark[2] \footnotemark[4] , \\
Robert Kohn\footnotemark[2] \footnotemark[4] { and}
Scott A. Sisson\footnotemark[2] \footnotemark[3]}

\date{}

\begin{document}
\maketitle
\begin{abstract}
\onehalfspacing
This article considers the problem of estimating a multivariate probit model in a panel data setting with emphasis on sampling a high-dimensional correlation matrix and improving the overall efficiency of the data augmentation approach.  We reparameterise the correlation matrix in a principled way and then carry out efficient Bayesian inference using Hamiltonian Monte Carlo.  We also propose a novel antithetic variable method to generate samples from the posterior distribution of the random effects and regression coefficients, resulting in significant gains in efficiency.  We apply the methodology by analysing stated preference data obtained from Australian general practitioners evaluating alternative contraceptive products.  Our analysis suggests that the joint probability of discussing combinations of contraceptive products with a patient shows medical practice variation among the general practitioners, which indicates some resistance to even discuss these products, let alone recommend them.
\end{abstract}
\keywords{Antithetic variable; Bayesian inference; Correlated binary data; Hamiltonian Monte Carlo; Panel data.}

\doublespacing
\newpage
\section{Introduction} \label{sec:introduction}

Bayesian inference for the multivariate probit (MVP) model is usually performed using the data augmentation representation of \cite{chib1998analysis}, whereby the latent variables indicating the observed outcomes are normally distributed.  For unique identification of the regression parameters, the covariance matrix of these latent normal random variates is assumed to be a correlation matrix $\bR$.  However, Monte Carlo sampling for $\bR$ in a Bayesian context is difficult due to the restrictions on the diagonal entries and the requirement that the matrix $\bR$ must be positive definite.

This article presents three contributions, two  methodological and the third a subject matter one.  The first methodological contribution provides an improved method for sampling the potentially high dimensional correlation matrix $\bR$ within a Markov chain Monte Carlo (MCMC) algorithm.  In order to circumvent the positive definiteness restriction imposed on a correlation matrix, we adopt the reparameterisation strategy of \cite{smith2013bayesian} which re-expresses $\bR$ as an unconstrained Cholesky factor $\Leps$.  This maps the manifold space of a correlation matrix to a Euclidean space, which  improves posterior simulation while keeping the number of unknown parameters the same.  A prior distribution is then specified on $\Leps$ such that the implied marginal densities of the correlation coefficients are uniform on $(-1,1)$.  We employ the Hamiltonian Monte Carlo (HMC) algorithm \citep{neal2011mcmc} to sample the high dimensional $\Leps$ efficiently, thereby avoiding the slow exploration of parameter space by random walk updates as in \cite{smith2013bayesian}.

The second methodological contribution is to introduce antithetic sampling, based on the work of \cite{hammersley1956new}, into the Metropolis-Hastings (MH) literature.  In order to implement this idea, we specify the proposal distribution of parameter update as a deterministic function.  Here, the generated samples will be super-efficient in terms of the reduction in variance of the Monte Carlo estimates compared to the same estimates constructed from uncorrelated samples.  Although the chain update proposal is deterministic, the convergence properties are not compromised when this is embedded within a larger system of MCMC sampling.  Our proposed methodology is motivated by the over-relaxation algorithm \citep{adler1981over, barone1990improving}, and is similar to the idea built within the framework of HMC in \cite{pakman2014exact}.  However, our proposed sampler is different from these methods in two main aspects.  First, there is no randomness in the proposal distribution for parameter updates in our method, whereas theirs still retain a certain degree of stochasticity.  Second, we introduce perfect negative correlation between successive MCMC samples via the deterministic proposal, while they suggest partial or zero dependence between the samples.  Results based on our real data application document a significant improvement of up to a 16 times performance gain in the mixing behaviour of the Markov chain, thereby lowering the autocorrelation between the iterates.  The computing time of the algorithm is also marginally reduced due to the deterministic sampling.

Our methodological development is motivated by the staged stated preference panel data collection described in \cite{fiebig2017consideration}, which is used to study the decision-making of Australian general practitioners (GPs) about female contraceptive products.  Here, the authors used the data from the third and final stage, whereas we explore outcomes from the second stage.  This second stage relates to the question of which particular contraceptive products GPs would discuss with a female patient, defined by a vignette that is part of the experimental design.  Separate univariate analyses on each product would ignore possible complex dependence structures that are useful in exploring which particular bundles of products are discussed with patients.  This is important here because in any correlated choice problem there may be multiple close substitutes, which makes joint rather than marginal probabilities more relevant.  Therefore, we model the GPs' choices by an MVP model.  Inspection of the resulting graphical model describing this interaction between products lends support to the suitability of a multivariate approach.  By using the MVP model, we are able to compute the joint probability of specific product bundles being discussed with a patient.  Posterior estimation of this probability, based on a patient with certain socio-economic and clinical characteristics, reveals differing views among the GPs in the sample on the suitability of long acting contraceptive choices.  This variability is known as medical practice variation in the health industry, whereby the decision making of GPs is influenced by both their personal characteristics such as gender, age and qualifications, as well as other unobservables that we model as random effects.

The rest of the paper is organised as follows. Section~\ref{sec:multivariate} describes the MVP model with random effects and reviews previous research associated with sampling $\bR$.  Section~\ref{sec:efficient} presents our proposed methodology of sampling $\bR$, and Section~\ref{sec:deterministic} outlines the antithetic sampling technique whose efficiency is illustrated via simulation studies in Section~\ref{sec:simulation}.  Section~\ref{sec:application} provides our analysis of the discussion preference data of contraceptive products by Australian GPs, and Section~\ref{sec:conclusion} concludes.  Appendices~\ref{app:sampling scheme}--\ref{app:covariance} provide further details on the contraceptive product data analysis.

\section{Multivariate probit model with random effects} \label{sec:multivariate}

The MVP model has been used extensively to model correlated binary data \citep{gibbons1998health, buchmueller2013preference}.  Let $\by_{it}=(y_{1,it}, \dotsc, y_{D,it})^\top$ be a vector of $D$ correlated binary outcomes for individual $i=1,\dotsc,P$ at time period $t$, for $t=1, \dotsc, T$.  The latent variable representation of the MVP model, using the data augmentation approach of \cite{albert1993bayesian}, is given by
\begin{gather}
\by^\ast_{it} = \ba_i + \bB \bx_{it} + \beps_{it},
\label{eqn:MVP} \\
\ba_i = (\a_{1,i}, \dotsc, \a_{D,i})^\top \stackrel{iid}{\sim} \N({\bf 0}, \Sig_{\ba}), \\
\beps_{it} = (\eps_{1,it}, \dotsc, \eps_{D,it})^\top \stackrel{iid}{\sim} \N({\bf 0}, \bR),
\end{gather}
for $i=1,\dotsc,P, t=1,\dotsc,T,$ where $\by^\ast_{it}=(y^\ast_{1,it}, \dotsc, y^\ast_{D,it})^\top$ is a continuous latent variable, $\ba_i$ is a $D$-vector of outcome-specific random effects for individual $i$ allowing for heterogeneity between individuals, $\bx_{it}=(1, x_{1,it}, \dotsc, x_{K-1,it})^\top$ is an exogenous variable, $\bB$ is a $D \times K$ matrix of regression coefficients and $\beps_{it}$ is a $D$-vector correlated error term which models the dependence structure between outcomes.  The variable $\bx_{it}$ is assumed to be uncorrelated with both $\ba_i$ and $\beps_{it}$.  This is entirely appropriate in the stated preference case that is our motivating analysis but relaxing the assumption of exogenous $\bx_{it}$ represents a useful extension.  In order for $\bB$ to be uniquely identified \citep{chib1998analysis}, $\bR$ is set to be a correlation matrix.  The observed outcome $\by_{it}$ is defined to be dependent on the latent variable $\by^\ast_{it}$ via the relationship
\begin{equation}
y_{d,it} = \mathds{1}(y^\ast_{d,it}>0), \quad d=1, \dotsc, D,
\label{eqn:latent}
\end{equation}
where $\mathds{1}(\bm E)$ is an indicator function which takes value 1 if the event $\bm E$ occurs and 0 otherwise.  Let $\by=\{\by_{it}; i=1,\dotsc,P, t=1,\dotsc,T\}$ be the set of observed discrete outcomes.  The density of the latent variables $\by^\ast$ conditional on the random effects $\ba_{1:P}=(\ba_1,\dotsc,\ba_P)$ is given by
\begin{equation}
p(\by^\ast|\ba_{1:P}, \bt) = \prod_{i=1}^P \prod_{t=1}^T \phi(\by^\ast_{it}; \bmu_{it}, \bR),
\label{eqn:likelihood}
\end{equation}
where $\bt:=(\bB, \bR, \Sig_{\ba})$ denotes the vector of model parameters, $\bmu_{it}=\ba_i + \bB \bx_{it}$ and $\phi$ is the multivariate normal density function.

Following the specification of the MVP model in (\ref{eqn:MVP})--(\ref{eqn:latent}), the posterior density is
\begin{equation}
\pi(\by^\ast, \ba_{1:P}, \bt|\by) = \frac{p(\by|\by^\ast, \ba_{1:P}, \bt)p(\by^\ast|\ba_{1:P},\bt)p(\ba_{1:P}|\bt)p(\bt)}{p(\by)},
\label{eqn:posterior}
\end{equation}
where $p(\by)$ is the marginal likelihood, $p(\bt)$ is the prior on the model parameters $\bt$ and
\begin{equation}
p(\by|\by^\ast, \ba_{1:P}, \bt) = \prod_{i=1}^P \prod_{t=1}^T \prod_{d=1}^D \bigg(\mathds{1}(y_{d,it}=0)\mathds{1}(y^\ast_{d,it} \leq 0) + \mathds{1}(y_{d,it}=1)\mathds{1}(y^\ast_{d,it} > 0)\bigg).
\label{eqn:augmented posterior}
\end{equation}
Useful conjugate priors are available for $\bB$ (or $\bb=\text{vec}(\bB)$) and $\Sig_{\ba}$ which simplifies MCMC sampling, but it is difficult to posit a suitable prior for $\bR$.

\subsection{Prior choice for the correlation matrix \texorpdfstring{$\bR$}{bReps}} \label{subsec:prior}

\cite{barnard2000modeling} decompose a covariance matrix $\bSigeps$ as $\bm{S}\bR\bm{S},$ where $\bm{S}$ is a diagonal matrix of standard deviations and $\bR$ is a correlation matrix.  They show that if $\bSigeps \sim \mathcal{IW}(\nu, \bm{I})$, i.e. an inverse-Wishart distribution with degrees of freedom $\nu$ and scale matrix $\bm{I}$, then the density of $\bR$ is
\begin{equation}
p(\bR) \propto |\bR|^{\frac{1}{2}(\nu-1)(D-1)-1}\Bigg(\prod_{i=1}^D|\bR(-i;-i)|\Bigg)^{-\frac{\nu}{2}},
\label{eqn:marginally uniform prior}
\end{equation}
where $\bR(-i;-i)$ denotes the $i$-th principal submatrix of $\bR$, that is $\bR$ with its $i$-th row and column removed.  We follow \cite{barnard2000modeling} and take (\ref{eqn:marginally uniform prior}) as the prior for $\bR$, which induces a modified Beta distribution on each off-diagonal element $r_{ij}$ of $\bR, i \neq j$.  In particular, the marginal densities of the $r_{ij}$ are uniform on $(-1,1)$ when $\nu=D+1$, which means that posterior inference is invariant to the ordering of the binary outcomes $\by$.  Furthermore, recent results in \cite{wang2018on} establish that for such a choice of $\nu$, the corresponding matrix of partial correlations $\rho_{kl}$ has the LKJ distribution of \cite{lewandowski2009generating} with unit shape parameter.  This means that all $\rho_{kl}$ are marginally distributed according to a $\textrm{Beta}(\frac{D}{2}, \frac{D}{2})$ distribution over $(-1,1)$ with both shape parameters $\frac{D}{2}$, which is informative in high dimensions because the Beta density increasingly concentrates around zero.  The informativity of $\rho_{kl}$ is useful in practical applications, where more often than not a sparse structure on the partial correlation matrix is desirable to suggest conditional independence.

The dependence structures imposed by the marginally uniform prior are less studied in the literature.  Since analytical results for these properties are limited \citep{tokuda2011visualizing}, we briefly illustrate these graphically instead.  The results obtained are based on correlation matrices of dimension $D=4$ but they can be generalised to higher dimensions.  We generate $10^7$ samples from (\ref{eqn:marginally uniform prior}) with $\nu=D+1$ by normalising the covariance matrices drawn from an $\mathcal{IW}(D+1, \bm{I})$ distribution.  Figure~\ref{fig:corr_vs_par_corr} illustrates the pairwise dependence structures among the correlations $r_{ij}$ and the partial correlations $\rho_{kl}$ when the pairs share (top panels) or do not share (bottom panels) common indices. When there is a shared index, the density on $(r_{12}, r_{13})$ tends to support similar values in absolute terms (the visible cross pattern), which is less apparent when there is no common index in $(r_{12}, r_{34})$. However, both distributions have most of their density on the vertices corresponding to $|r_{ij}|\approx1$.  This means that inference for all pairs of $r_{ij}$ is skewed towards jointly extreme values a priori (the univariate margin for each $r_{ij}$ is still uniform on $(-1,1)$), although this effect diminishes with an increase in the number of observations.  In contrast, pairs of partial correlations $\rho_{kl}$ exhibit no dependence structure regardless of whether or not there is a common index.  Independence is also observed between $r_{ij}$ and $\rho_{kl}$, except when both parameters have the same indices $(r_{12},\rho_{12})$ in which case they are strongly positively correlated.
\begin{figure}[t!]
    \centering
    \includegraphics[trim= 1cm 3.5cm 1.25cm 5cm,clip,width=\textwidth]{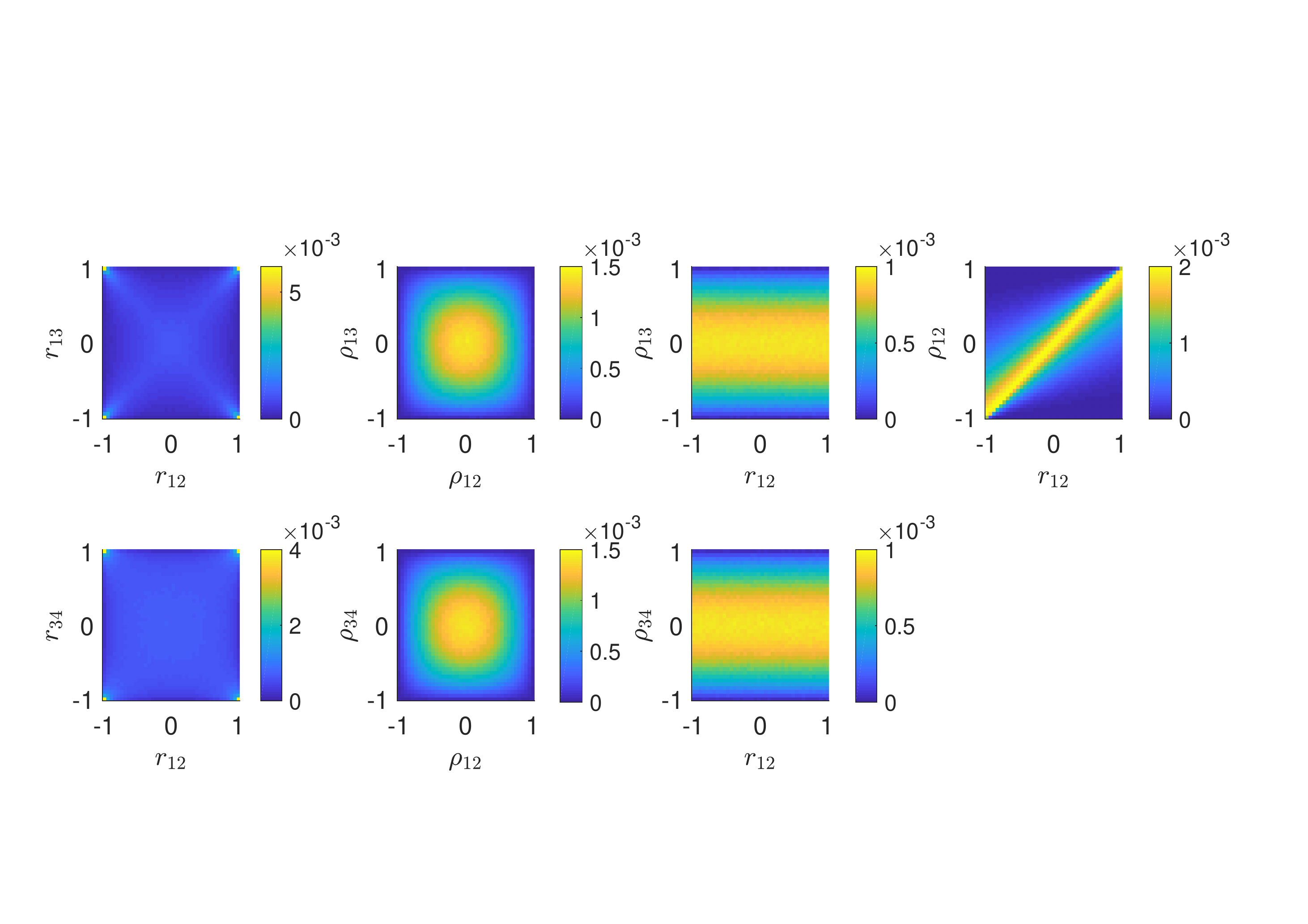}
    \caption{Bivariate density plots showing the dependence structures associated with the marginally uniform prior (\ref{eqn:marginally uniform prior}) on $\bR$ with $\nu=D+1$, for pairs of parameters sharing common indices (top panels) and without a common index (bottom panels).}
    \label{fig:corr_vs_par_corr}
\end{figure}

We now discuss related work on priors for $\bR$.  Let $\mathcal{R}^D$ be the space of all valid correlation matrices.  \cite{barnard2000modeling} also suggest a uniform prior over all correlation matrices in $\mathcal{R}^D$, which is equivalent to the LKJ prior with unit shape, as suggested by the \cite{stan2017stan}.  Note that the induced prior on the partial correlation matrix is the marginally uniform prior in (\ref{eqn:marginally uniform prior}) with $\nu=D+1$ (Figure \ref{fig:corr_vs_par_corr}).  This might not be a suitable prior for $\rho_{kl}$ since, as discussed above, this joint distribution for $\rho_{kl}$ exhibits dependence and has large mass on extreme values.  \cite{chib1998analysis} propose using a multivariate normal prior on the $r_{ij}$, with the support of the prior restricted to values of $r_{ij}$ which give a correlation matrix in $\mathcal{R}^D$, while \cite{liechty2004bayesian} introduce a mixture of normal distributions prior on $r_{ij}$ to express a priori knowledge of blocked structure in $\bR$.  However, these choices of normal priors do not imply that all marginal densities of the $r_{ij}$ are the same due to the constraints imposed on the $r_{ij}$ for the resulting $\bR$ to be in $\mathcal{R}^D$.

\subsection{Posterior sampling of \texorpdfstring{$\bR$}{bReps}} \label{subsec:posterior}

Posterior simulation for $\bR$ is challenging for two reasons: (i) the diagonal elements of $\bR$ must be 1 and, (ii) $\bR$ must be positive definite.  \cite{chib1998analysis} suggest sampling the $r_{ij}$ elements of $\bR$ in blocks using a random walk Metropolis-Hastings (RWMH) algorithm with a multivariate $t$ proposal density.  However, the resulting matrix obtained after each proposal is not guaranteed to be a valid correlation matrix in addition to the RWMH algorithm being notorious for its slow exploration of the parameter space.  Tuning the parameters of this proposal distribution also requires finding an approximate mode of the log posterior distribution and the observed Fisher information for every iteration, resulting in high computational overheads.  In the setting of hierarchical regression models, \cite{barnard2000modeling} adopt the Griddy-Gibbs sampler of \cite{ritter1992facilitating} to sample $\bR$.  Here, prior to the Gibbs step, one needs to solve a quadratic equation to determine the support for a single $r_{ij}$ (while keeping the rest fixed) which results in a valid correlation matrix.  The authors document the clear inefficiency in this sampling scheme when the prior in (\ref{eqn:marginally uniform prior}) is used due to its tendency to place more weight on the edges of $\mathcal{R}^D$ space.  Moreover, the design of drawing one $r_{ij}$ at a time becomes computationally prohibitive when $D$ is large.

\section{Efficient sampling for \texorpdfstring{$\bR$}{bReps} when using a marginally uniform prior} \label{sec:efficient}

This section describes an efficient way of sampling $\bR$ by utilising Hamiltonian dynamics \citep{duane1987hybrid}.  This involves reparameterising $\bR$ to enable sampling of parameters in an unconstrained space.  Due to the attractive properties of the marginally uniform prior in (\ref{eqn:marginally uniform prior}) with $\nu=D+1$ discussed in Section~\ref{subsec:prior}, we will use this prior hereafter.  Inference for the posterior distribution in (\ref{eqn:posterior}) can be performed using a Gibbs sampler (see Chapter 10 of \cite{greenberg2012introduction} for details). Our focus here is on the following non-standard conditional posterior distribution
\begin{equation}
\pi(\bR|\by, \by^\ast, \ba_{1:P}, \bt_{-\bR}) \propto \prod_{i=1}^P \prod_{t=1}^T \phi(\by^\ast_{it}; \bmu_{it}, \bR) \cdot p(\bR),
\label{eqn:conditional posterior}
\end{equation}
where $\bt_{-\bm{\mathcal{S}}}$ is defined as $\bt$, but excluding the parameters $\bm{\mathcal{S}}$.

\subsection{An unconstrained parameterisation} \label{subsec:unconstrained}

Because of the restrictions on sampling correlation coefficients on a confined space, we adopt the reparameterisation strategy in \cite{smith2013bayesian} which re-expresses $\bR$ via a positive definite matrix $\bSigeps$ as
\begin{equation}
\bR = \bL_{\beps}^{-1/2} \bSigeps \bL_{\beps}^{-1/2},
\label{eqn:reparameterisation}
\end{equation}
where $\bL_{\beps}=\diag(\bSigeps)$.  The covariance matrix $\bSigeps$ can then be written in terms of its Cholesky factorisation $\bSigeps=\Leps\Leps^\top,$ where $\Leps$ is a lower triangular matrix.  The diagonal elements of $\Leps$ are set to 1 so that the transformation of $\bR$ to $\Leps$ is one-to-one.  We define an operator $\vechL$ which vectorises the strict lower triangle of a matrix by row.  The unknown parameter $\vechL(\Leps)=\{L_{ij};i=2, \dotsc, D, j<i\}$ lies in $\mathbb{R}^{D(D-1)/2}$ and is therefore unconstrained.  \cite{lindstrom1988newton} also implement the Cholesky factorisation on a covariance matrix to optimise the log-likelihood function of a linear mixed effects model.  Other possible reparameterisation methods for $\bR$ include using polar coordinates \citep{rapisarda2007parameterizing} and partial autocorrelations \citep{daniels2009modeling}, but we adopt the representation in (\ref{eqn:reparameterisation}) due to its computational tractability.

By using a change of variables, we can rewrite the density function in (\ref{eqn:conditional posterior}) in terms of $\Leps$ as 
\begin{equation}
\pi(\Leps|\by, \by^\ast, \ba_{1:P}, \bt_{-\Leps}) \propto \pi(\bR|\by, \by^\ast, \ba_{1:P}, \bt_{-\bR}) \cdot |{\bf J}|,
\label{eqn:unconstrained conditional posterior}
\end{equation}
where $|{\bf J}|=|\partial \vechL(\bR) / \partial \vechL(\Leps)^\top|$ is the determinant of the Jacobian for the transformation.  We now note that for the transformation from $\bR$ to $\Leps$, the prior on lower triangular Cholesky factor $\Leps$ whose diagonal entries are all fixed as ones, given by
\begin{equation}
p(\Leps) \propto p(\bR) \cdot |{\bf J}|,
\label{eqn:prior cholesky}
\end{equation}
induces a marginally uniform prior on all $r_{ij}$ for $\nu=D+1$.

\subsection{Sampling the Cholesky factor using HMC} \label{subsec:sampling}

HMC, popularised by \cite{neal2011mcmc}, has enjoyed considerable recent interest within the statistical literature due to its ability to generate credible but distant candidate parameters for the MH algorithm, thereby reducing autocorrelation in the posterior samples.  It does so by exploiting gradient information of the log posterior density to simulate a trajectory according to physical dynamics.

Given a target distribution of interest $\pi(\vt)$, which in our case is the density in (\ref{eqn:unconstrained conditional posterior}), HMC introduces a fictitious momentum variable $\bu$ into the physical system, which is assumed to follow a $\N({\bf 0}, \bm{M})$ pseudo-prior and targets the augmented distribution
\begin{equation}
\pi(\vt, \bu) \propto \exp(-\mathcal{H}(\vt, \bu)),
\label{eqn:HMC}
\end{equation}
where $\mathcal{H}(\vt, \bu) = -\log \pi(\vt) + \frac{1}{2} \bu^\top \bm{M} ^{-1} \bu$ is termed the Hamiltonian which is made up of potential energy and kinetic energy components.  The potential energy is derived from minus the log density of $\vt$ under the target distribution while the kinetic energy is due to the movement of the momentum variable $\bu$.  The Hamiltonian system is used to describe the evolution of $\vt$ and $\bu$ over time $t$ via the differential equations
\begin{equation}
\frac{d \vt}{dt} = \frac{\partial \mathcal{H}}{\partial \bu} \quad \textrm{and} \quad \frac{d \bu}{dt} = -\frac{\partial \mathcal{H}}{\partial \vt}.
\label{eqn:hamiltonian dynamics}
\end{equation}
The dynamics in (\ref{eqn:hamiltonian dynamics}) can be implemented in practice using the leapfrog method \citep{neal2011mcmc} and discretising continuous time by a stepsize $\vareps$ so that
\begin{equation}
\begin{aligned}
\bu(t + \vareps/2) &= \bu(t) - (\vareps/2) \frac{\partial \mathcal{H}}{\partial \vt}(\vt(t)) \\
\vt(t+\vareps) &= \vt(t) + \vareps \frac{\partial \mathcal{H}}{\partial \bu}(\bu(t+\vareps/2)) \\
\bu(t+\vareps) &= \bu(t+\vareps/2) - (\vareps/2) \frac{\partial \mathcal{H}}{\partial \vt}(\vt(t+\vareps)).
\end{aligned}
\label{eqn:leapfrog}
\end{equation}
\cite{neal2011mcmc} shows that properties of the Hamiltonian such as reversibility and volume preservation are maintained under the symplectic integrator in (\ref{eqn:leapfrog}).  Proposed values $\vt'$ and $\bu'$ obtained after a trajectory length of $\mathcal{T}=n\vareps$ by iterating procedures in (\ref{eqn:leapfrog}) $n$ times are then accepted with probability $\min \{1, \exp(\mathcal{H}(\vt, \bu)-\mathcal{H}(\vt', \bu')) \}$. The invariant distribution of the Markov chain generated from the HMC algorithm is $\pi(\vt, \bu)$ and samples from $\pi(\vt)$ can be obtained by marginalising out the momentum $\bu$.

In order to implement the HMC algorithm as described above, computation of the derivatives of (\ref{eqn:unconstrained conditional posterior}) with respect to the $L_{ij}$ is required for the leapfrog update.  Lemma~\ref{lemma:derivatives for gradient} derives the expressions for these gradients.
\begin{lemma}
Let $\bm{E}_k$ denote the matrix obtained by removing column $k$ from an identity matrix $\bm{I}$.  For the parameterisation of $\bR$ in (\ref{eqn:reparameterisation}),
\begin{enumerate}[font=\normalfont, label=(\roman*)]
\item $\displaystyle \frac{\partial \bR^{-1}}{\partial L_{ij}}=-\bL_{\beps}^{1/2} \bigg( \bSigeps^{-1} \frac{\partial \bSigeps}{\partial L_{ij}} \bSigeps^{-1} + \frac{\partial \bL_{\beps}^{-1/2}}{\partial L_{ij}} \bL_{\beps}^{1/2} \bSigeps^{-1} + \bSigeps^{-1} \bL_{\beps}^{1/2} \frac{\partial \bL_{\beps}^{-1/2}}{\partial L_{ij}} \bigg) \bL_{\beps}^{1/2}$.
\item $\displaystyle \frac{\partial \log|\bR(-k;-k)|}{\partial L_{ij}}=\text{tr}\bigg(\bR^{-1}(-k;-k) \bm{E}_k^\top \frac{\partial \bR}{\partial L_{ij}} \bm{E}_k\bigg)$.
\item $\displaystyle \frac{\partial \log |\bR|}{\partial L_{ij}}= - \frac{2L_{ij}}{\sum_{k=1}^i L^2_{ik}}$.
\end{enumerate}
\label{lemma:derivatives for gradient}
\end{lemma}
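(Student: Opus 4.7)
The proof splits into three essentially independent calculations, one per item.

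For (i), my plan is to invert (\ref{eqn:reparameterisation}), using that $\bL_{\beps}^{-1/2}$ is diagonal, to write $\bR^{-1}=\bL_{\beps}^{1/2}\bSigeps^{-1}\bL_{\beps}^{1/2}$ and then differentiate by the product rule. The middle factor is handled by the standard matrix-inverse identity $\partial\bSigeps^{-1}/\partial L_{ij}=-\bSigeps^{-1}(\partial\bSigeps/\partial L_{ij})\bSigeps^{-1}$. For the outer factors I will convert derivatives of $\bL_{\beps}^{1/2}$ into derivatives of $\bL_{\beps}^{-1/2}$ by differentiating the identity $\bL_{\beps}^{1/2}\bL_{\beps}^{-1/2}=\bm{I}$; since every matrix appearing here is diagonal and so commutes freely, this yields $\partial\bL_{\beps}^{1/2}/\partial L_{ij}=-\bL_{\beps}^{1/2}(\partial\bL_{\beps}^{-1/2}/\partial L_{ij})\bL_{\beps}^{1/2}$. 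Substituting and then factoring a common $\bL_{\beps}^{1/2}$ on the left and on the right reproduces the symmetric three-term form stated in the lemma.

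For (ii), I will observe that the principal submatrix can be written as $\bR(-k;-k)=\bm{E}_k^\top\bR\bm{E}_k$, since pre-multiplication by $\bm{E}_k^\top$ deletes row $k$ and post-multiplication by $\bm{E}_k$ deletes column $k$. Applying Jacobi's formula $\partial\log|\bm{A}|/\partial x=\text{tr}(\bm{A}^{-1}\partial\bm{A}/\partial x)$ with $\bm{A}=\bm{E}_k^\top\bR\bm{E}_k$, and using that $\bm{E}_k$ is independent of $L_{ij}$, yields the claim in a single line. For (iii), the decisive observation is that $\bSigeps=\Leps\Leps^\top$ with $\Leps$ lower triangular and unit diagonal forces $|\bSigeps|=\prod_i L_{ii}^2=1$; from $\bR=\bL_{\beps}^{-1/2}\bSigeps\bL_{\beps}^{-1/2}$ it follows that $\log|\bR|=-\log|\bL_{\beps}|=-\sum_{k=1}^{D}\log\left(\sum_{l=1}^{k}L_{kl}^{2}\right)$, because $[\bSigeps]_{kk}=\sum_{l=1}^{k}L_{kl}^{2}$. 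Differentiating with respect to $L_{ij}$ with $j<i$ annihilates every summand except $k=i$ and produces the stated ratio.

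The main obstacle is the careful bookkeeping in (i): converting the $\bL_{\beps}^{1/2}$-derivative into an $\bL_{\beps}^{-1/2}$-derivative and then regrouping the three additive contributions into the precise symmetric form displayed in the lemma. Parts (ii) and (iii) are essentially one-line consequences of Jacobi's formula and the unit-Cholesky identity $|\bSigeps|=1$, respectively, and carry no genuine difficulty once the identifications $\bR(-k;-k)=\bm{E}_k^\top\bR\bm{E}_k$ and $[\bSigeps]_{kk}=\sum_{l=1}^{k}L_{kl}^{2}$ are made explicit.
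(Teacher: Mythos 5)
Your proof is correct and follows essentially the same route as the paper: for (i) you apply the product rule and the inverse-derivative identity across the relation $\bR^{-1}=\bL_{\beps}^{1/2}\bSigeps^{-1}\bL_{\beps}^{1/2}$ (the paper just runs the same chain rule in the opposite direction, expressing $\partial\bSigeps^{-1}/\partial L_{ij}$ via $\partial\bR^{-1}/\partial L_{ij}$), and for (ii) and (iii) you use exactly the paper's identifications $\bR(-k;-k)=\bm{E}_k^\top\bR\bm{E}_k$ with the log-determinant derivative formula and $|\bR|=|\bL_{\beps}|^{-1}$ from $|\bSigeps|=1$. No gaps; the argument is sound as written.
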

\begin{proof}
Lemma 1(i) and (ii) are respectively obtained using Theorems 1 and 2 in Chapter 8 of \cite{magnus1999matrix}, by expressing $\frac{\partial \bSigeps^{-1}}{\partial L_{ij}}$ in terms of $\frac{\partial \bR^{-1}}{\partial L_{ij}}$ using the chain rule, and writing $\bR(-k;-k)$ as $\bm{E}_k^\top \bR \bm{E}_k$.  Lemma 1(iii) is straightforward by noting that $|\bR|=|\bL_{\beps}|^{-1}$ since $|\Sig_\eps|=1$ from its Cholesky decomposition.
\end{proof}

\section{A deterministic proposal distribution} \label{sec:deterministic}

Various strategies have been proposed to reduce the variability in the Monte Carlo estimate of the expectation $\E[f(\vt)]$ of a scalar function $f$ of parameter $\vt$ with respect to some posterior distribution $\pi(\vt)$, including the Rao-Blackwellisation \citep{robert2004monte} and the control variates \citep{dellaportas2012control, oates2017control}.  These techniques produce an efficient estimator of $\E[f(\vt)]$ based on sampled $\vt$ generated from an MCMC sampler.

Here, we focus on a particular class of methods which integrate variance reduction techniques dynamically within an MCMC sampling algorithm.  Let $\vt=(\vartheta_1, \dotsc, \vartheta_n)^\top$ be a parameter vector with normal full conditional distributions $\vartheta_i|\vt_{-i} \sim \N(\mu_i,\s_i^2)$, where the conditional mean $\mu_i$ and the conditional variance $\s^2_i$ may depend on $\vt_{-i} = \{\vartheta_j:j=1, \dotsc,n, j \not = i\}$.  \cite{adler1981over} and \cite{barone1990improving} introduce an over-relaxation method where the update on $\vt$ is performed using Gibbs sampling, and where the new value $\vartheta'_i$ for each margin of $\vt$ is generated as 
\begin{equation}
    \vartheta'_i = (1+\k)\mu_i - \k \vartheta_i + u \s_i \sqrt{1-\k^2}, \quad i=1,\dotsc,n,
    \label{eqn:over-relaxation}
\end{equation}
with $u\sim \mathcal{N}(0,1)$ being a standard normal random variable. Equation (\ref{eqn:over-relaxation}) allows for the introduction of dependence between successive samples via the constant antithetic parameter $\k$, which is required to be in the open interval $(-1,1)$ so that the Markov chain is ergodic and produces $\pi(\vt)$ as its stationary distribution.  This scheme is exactly the conventional Gibbs sampler when $\k=0$. Variance reduction in estimating $\E[f(\vt)]$ is achieved through the antithetic variable method \citep{hammersley1956new} by setting $\k>0$ so that the estimation bias in the previous sample is corrected in the opposite direction.   The rate of convergence for the over-relaxation method in (\ref{eqn:over-relaxation}) is studied in \cite{barone1990improving}, while \cite{green1992metropolis} establish that the asymptotic variance of the estimator for $\E[f(\vt)]$ using this strategy for linear $f$ is proportional to $\frac{1-\k}{1+\k}$.

The inefficiency of an MCMC sampler in estimating $\E[f(\vt)]$ is usually measured by the integrated autocorrelation time \citep{roberts2009examples}, which is defined as
\begin{equation*}
\textrm{IACT}_f = 1 + \sum_{j=1}^\infty \rho_{j,f},
\end{equation*}
where $\rho_{j,f}$ is the lag $j$ autocorrelation function of the MCMC iterates of $f(\vt)$ after convergence.  Alternatively, one can measure the efficiency of the sampler by computing the effective sample size per MCMC iteration, which by definition is the reciprocal of the IACT.  A small value of the IACT is desirable in practice as it indicates that the Markov chain mixes well.  Motivated by the over-relaxation sampler and noting that the IACT can be less than 1 if some of the autocorrelations are negative, in which case a Monte Carlo estimator constructed is super-efficient, we introduce into the MH literature a deterministic design of the proposal distribution for $\vt$
\begin{equation}
q(\vt'|\vt) = \d_{\psi(\vt)}(\vt'),
\label{eqn:deterministic proposal}
\end{equation}
where $\psi$ is a mapping function which introduces negative correlation between samples and $\d_{\psi(\vt)}$ is the Dirac delta function at $\psi(\vt)$.  In this case, the MH acceptance probability involves the ratio of $\pi(\vt)$ evaluated at $\vt'$ and $\vt$.

When $\pi(\vt)$ is a normal distribution, we propose setting
\begin{equation}
\psi(\vt) = 2 \bmu_{\vt} - \vt,
\label{eqn:antithetic}
\end{equation}
where $\bmu_{\vt}$ is the mean of $\pi(\vt)$.  It is clear that (\ref{eqn:antithetic}) represents an example of the antithetic variable with perfect negative correlation, and also an instance of the over-relaxation method in (\ref{eqn:over-relaxation}) with $\k=1$, which is outside the range of values for which the Markov chain is ergodic.  Symmetry of the normal density gives $\pi(\vt')=\pi(\vt)$, which in turn translates to an acceptance probability of one.  Clearly, our proposed antithetic sampling will only yield an ergodic Markov chain when it is coupled with stochastic simulation of additional parameters that affect the value of the deterministic proposal $\psi(\vt)$, in particular $\bmu_{\vt}$.  Under this condition, the value of $\bmu_{\vt}$ changes in every iteration of the update and this drives the exploration of $\vt$ in the parameter space.  Furthermore, the dependence between $\vt$ and other model parameters prevents exact periodicity from occurring, and thus the Markov chain is aperiodic.

The conditional posterior distribution of the random effects $\ba_{1:P}$ in our MVP model is normal and likewise for the regression parameters $\bb$ when using a conjugate prior.  Therefore, we can employ the antithetic sampling method in (\ref{eqn:antithetic}) to improve the IACTs of $\ba_{1:P}$ and $\bb$.  In fact, antithetic sampling of normal random variables can also be understood in terms of a HMC update.  Suppose that $\vt \sim \N(\bmu_{\vt}, \Sig_{\vt})$, and the prior on the momentum variable $\bu$ is chosen as $\N({\bf 0}, \Sig^{-1}_{\vt})$.  \cite{pakman2014exact} show that the resulting Hamiltonian system can be solved analytically, with solution given by
\begin{equation}
\vt(t) = \bmu_{\vt} + \Sig_{\vt} \bu(0) \sin(t) + (\vt(0)-\bmu_{\vt}) \cos(t),
\label{eqn:solution for Hamilton}
\end{equation}
which is a linear combination of $\bmu_{\vt}$, the initial value $\vt(0)$ of $\vt$ and the initial momentum $\bu(0)$.  Note that (\ref{eqn:solution for Hamilton}) is a multivariate generalisation of (\ref{eqn:over-relaxation}) with $t=\textrm{cos}^{-1}(-\k)$.  Equation~(\ref{eqn:solution for Hamilton}) is thus equivalent to the antithetic sampler in (\ref{eqn:antithetic}) when setting $t=\pi$ radians.  Since there is no approximation error in the Hamiltonian dynamics for a normal distribution, an MH accept-reject step is not required in the HMC sampler, and the proposed value of $\vt$ will always be accepted.  This equivalence relation was first observed by \cite{pakman2014exact}, but was not particularly useful in their framework of sampling from a truncated multivariate normal distribution.  Our proposal for antithetic sampling is different from theirs in the sense that it is entirely deterministic, and we choose $t=\pi$ radians to induce a perfect negative proposal correlation.  \cite{pakman2014exact}, on the other hand,  suggest setting $t=\frac{\pi}{2}$ radians, which is equivalent to drawing a fresh sample from a random number generator when it is applied to the setting of a normal distribution.  We refer to this approach as the independent sampler hereafter.

So far, our discussion has mainly focused on normal $\pi(\vt)$.  This is because an analytic solution to the Hamiltonian system is only available for a normal distribution.  
It is possible to extend the proposed antithetic sampler to more general distributions by obtaining an approximation of $\bmu_{\vt}$ in order to propose a new value of $\vt$, and then accept or reject the proposal in an MH algorithm to target the true $\pi(\vt)$, as suggested in \cite{green1992metropolis}.  However, the application of this generalisation and its variants (e.g.~\cite{creutz1987overrelaxation}) is somewhat limited due to high rejection rates in the accept-reject step \citep{neal1998suppressing}. In this case, the HMC algorithm provides a way to overcome this shortcoming.

\section{Simulation studies} \label{sec:simulation}

We now study the efficiency of the antithetic variable technique described in Section~\ref{sec:deterministic}.  Two examples are presented.  The first examines the antithetic sampler in a more general setting, while the second is specific to the application in Section~\ref{sec:application}.  Reported IACT values of the parameters are computed using the \verb|coda| package \citep{plummer2006coda} in \verb|R|.
\vspace{\baselineskip}

\noindent
{\bf Example 1. }The stationary distribution $\pi(\bt)$ is specified as a bivariate normal distribution with high correlation (0.99) between the variables.  We investigate the performance of three sampling schemes - the independent sampler, the over-relaxation algorithm with $\k=0.9$, and a coupling of the over-relaxation algorithm (on the first margin) with the antithetic sampler (on the second margin).  Note that this coupling strategy introduces stochasticity into the antithetic sampler, which is essential to produce an ergodic Markov chain.  The samplers are each run for 10\,000 iterations from the same initialised value $(2,2)$, and the update on each margin is performed conditional on the other.  Figure~\ref{fig:correlated normal} illustrates the trajectories of the first 50 samples generated.  Exploration of the target space is reduced to a random walk under the independent sampler.  In contrast, the other two samplers move between different contours of the density and explore the full support of the distribution in an elliptical manner, thereby reducing the IACT significantly.  The IACT decreases further when the over-relaxation algorithm on the second margin is replaced by antithetic sampling.  In this analysis, the mixing of both margins is improved by a factor of 1.75.
\begin{figure}[h!]
    \centering
    \includegraphics[trim= 0.5cm 1cm 6.5cm 9.5cm,clip,width=0.6\textwidth]{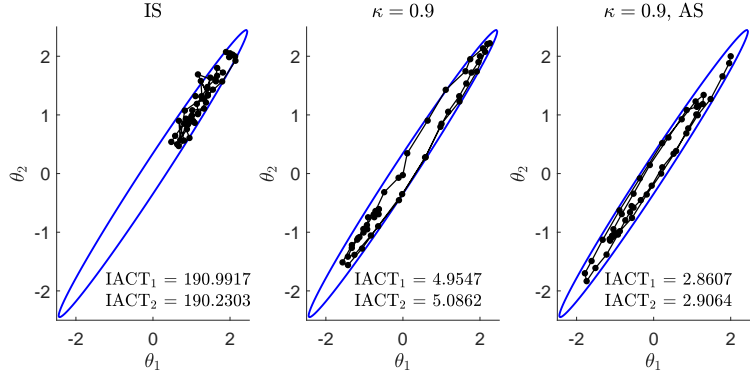}
    \caption{Trajectories of the first 50 samples generated from the independent sampler (left), the over-relaxation algorithm with $\k=0.9$ (middle), and the over-relaxation algorithm coupled with the antithetic sampler (right).  The blue solid lines represent the 95\% confidence region of the bivariate normal distribution.}
    \label{fig:correlated normal}
\end{figure}
\vspace{\baselineskip}

\noindent
{\bf Example 2. }A simulated dataset is generated following the MVP model given in (\ref{eqn:MVP})--(\ref{eqn:latent}), with $D=8, P=162, T=16$ and values of the parameters $\bt=(\bb, \bR, \Sig_{\ba})$ set to be the posterior mean estimates of the parameters in Model 1 of the female contraceptive product analysis of Section~\ref{sec:application}.  To avoid hand-tuning the stepsize $\vareps$ and the trajectory length $\mathcal{T}$ for the HMC update of $\Leps$, we utilise the No-U-Turn Sampler (NUTS) with the dual averaging scheme of \cite{hoffman2014no}.  We use the following non-informative prior distributions: $\bb \sim \mathcal{N}({\bf 0}, 100\bm{I})$, $\Sig_{\ba} \sim \mathcal{IW}(9, \bm{I})$ and the prior distribution on the lower triangular Cholesky factor $\Leps$ given in (\ref{eqn:prior cholesky}).  The sampling scheme is run for 30\,000 iterations, with the first 5\,000 samples discarded as burn-in.  Appendix~\ref{app:sampling scheme} details the Gibbs sampling scheme.

Figure~\ref{fig:autocorrelation} compares graphically the marginal posterior densities and sample autocorrelations of randomly sampled random effects $\ba_{1:P}$ and the regression parameter $\bb$ between independent and antithetic sampling.  Despite the absence of a stochastic component in the updates of $\ba_{1:P}$ and $\bb$, the kernel density plots of these parameters indicate that the coupling of a stochastic MCMC scheme for the remaining parameters with the antithetic variable technique gives the same posterior distributions as those under independent sampling.  The autocorrelation plots show that the samples generated from antithetic sampling have positive dependence with a higher rate of decay over the number of lags, thereby demonstrating the superior mixing of the Markov chain.  The IACT values of the randomly sampled parameters are significantly lower, with improvement factors of 3.72 and 2.10 observed for $\a_{3,80}$ and $\b_{182}$ respectively.  The box plot showing the distribution of the IACT values of $\bm{\a}_{1:P}$ also indicates that some of these parameters are super-efficient.  Furthermore, the log IACT ratios of the independent sampler compared to the antithetic sampler are well above 0, suggesting that all $\ba_{1:P}$ and $\bb$ parameters experience efficiency gains.  Although perfect negative correlation is induced between successive samples by the deterministic proposal, this does not necessarily translate to an equivalent autocorrelation in the posterior samples.  Rather, the negative relationship is used to reduce the magnitude of positive autocorrelation present in the MCMC samples.  Note that convergence to the posterior distribution might be slow for poorly initialised values under antithetic sampling so we suggest using independent sampling during the burn-in period and later switching to the deterministic proposal.
\begin{figure}[t!]
\onehalfspacing
\centering
\includegraphics[trim= 0.2cm 0.75cm 1.25cm 0.75cm,clip,width=0.7\textwidth]{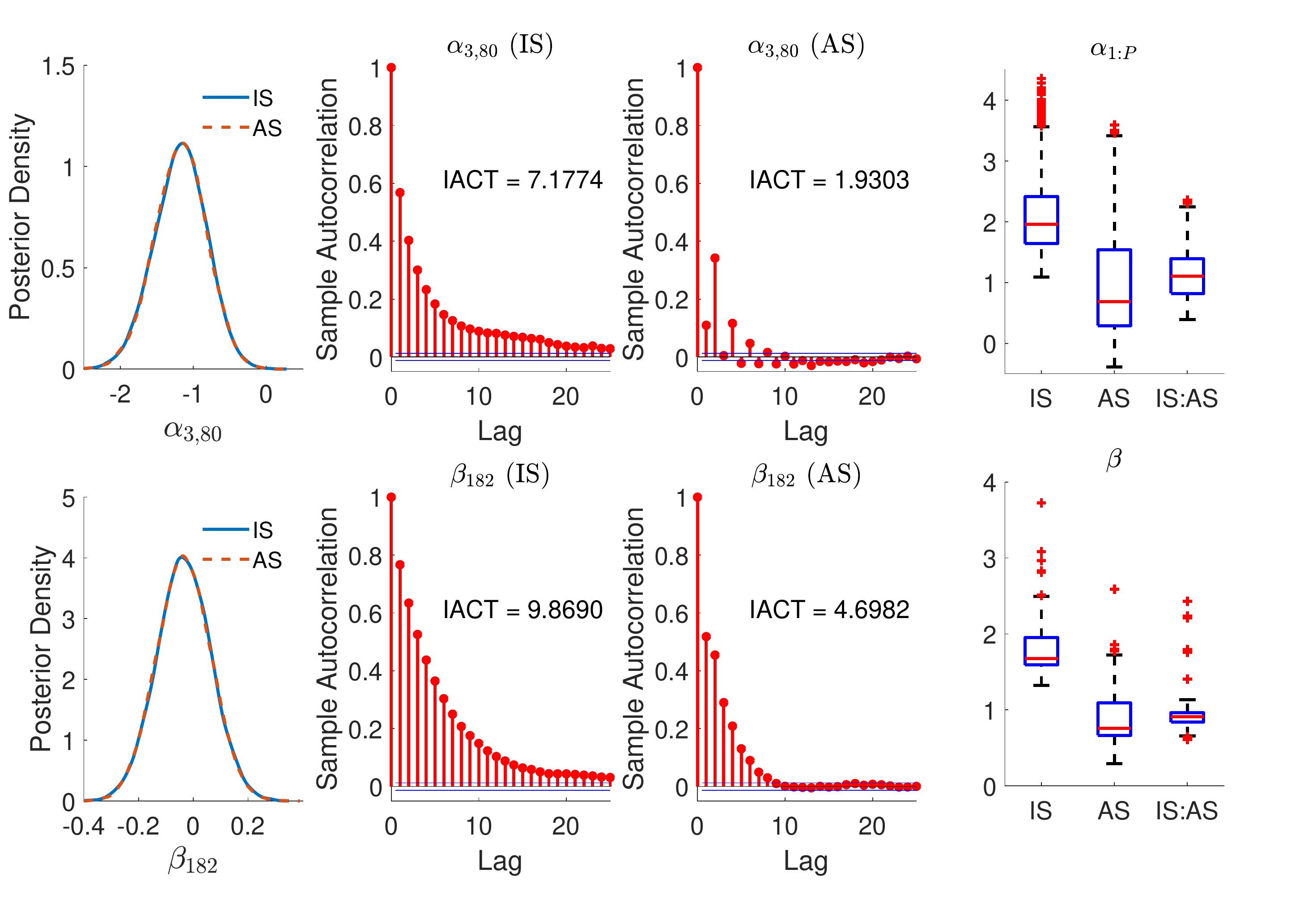}
\caption{Marginal posterior densities of a randomly selected random effects term (top panel) and regression coefficient (bottom panel), and their sample autocorrelation plots under independent sampling (IS) and antithetic sampling (AS).  Rightmost column gives the distributions of the IACT values and the element-wise IACT ratios of IS to AS for $\ba_{1:P}$ and $\bb$ on the log scale.}
\label{fig:autocorrelation}
\end{figure}

The remaining simulation experiments investigate the performance of the MVP model in the context of recovering the true parameters of the data generating process under different specifications of prior distribution on $\bt$.  We use the posterior root-mean-square error (RMSE) defined by
\begin{equation}
\textrm{RMSE}(\bt) = \sqrt{\frac{1}{N}\sum_{j=1}^N (\bt^{[j]}-\bt_{\textrm{true}})^2},
\label{eqn:RMSE}
\end{equation}
as the performance measure, where $\bt^{[j]}$ is the $j$-th iterate from the $N$ posterior samples and $\bt_{\textrm{true}}$ is the true value of $\bt$.  
The measure in (\ref{eqn:RMSE}) is defined for univariate $\bt$.  For a multivariate $\bt$, the posterior RMSE is calculated for each margin of $\bt$.  All the results shown are based on 1\,000 different replicate sets of simulated data with the same true parameter values.

We first consider the conditionally conjugate hierarchical inverse-Wishart $\mathcal{HIW}(\l, \bm{A})$ prior of \cite{huang2013simple} with degrees of freedom $\l$ and positive scale parameter $\bm{A}=(A_1, \dotsc, A_D)^\top$ as an alternative to the inverse-Wishart prior on the $D \times D$ covariance matrix $\Sig_{\ba}$,
\begin{equation*}
\begin{gathered}
\Sig_{\ba}|a_1,\dotsc,a_D \sim \mathcal{IW}\Bigg(\l + D - 1, 2 \l\diag\bigg(\frac{1}{a_1}, \dotsc, \frac{1}{a_D}\bigg)\Bigg), \\
a_i \stackrel{iid}{\sim} \mathcal{IG}(0.5, A^{-2}_i), \quad i=1,\dotsc, D,
\end{gathered}
\end{equation*}
where $\mathcal{IG}(a, b)$ is an inverse-Gamma distribution with shape $a$ and scale $b$.  The marginal prior of the standard deviation in $\Sig_{\ba}$ is a half-$t(\l, A_i)$ distribution, as suggested in \cite{gelman2006prior}.  In the simulation, we select $\l=2$ and choose a weakly informative scale parameter whereby $A_1 = A_2 = 0.23$ and $A_3 = \cdots = A_8 = 0.46$ so that approximately 95\% of the half-$t$ density is below 1 and 2 respectively.  This specification is relevant to the real data application in Section~\ref{sec:application}, where our prior belief is that the variability in the tendency of GPs to discuss pill contraceptives is lower compared to non-pill alternatives.  In contrast, the inverse-Wishart prior assumes the same variability for all variance parameters $\s^2_{\a_i}$ in $\Sig_{\ba}$.  Figure~\ref{fig:IW vs HIW} shows the distribution of the average RMSE ratio of each type of parameter in $\Sig_{\ba}$, based on 1\,000 replicate simulations, for the hierarchical inverse-Wishart prior versus the inverse-Wishart prior. Although the hierarchical inverse-Wishart prior is flexible enough to specify different strengths of prior on each $\s^2_{\a_i}$, Figure~\ref{fig:IW vs HIW} shows that in this case its performance is similar to the more restrictive inverse-Wishart prior.  This result is somewhat unsurprising considering that the estimated $\s^2_{\a_i}$ in the application example are more or less similar across the different contraceptive products (see Appendix~\ref{app:covariance}).  The distributions for the posterior RMSE ratio of the correlation coefficients and the partial correlations are concentrated around 1 since both the hierarchical inverse-Wishart prior with $\l=2$ and the inverse-Wishart prior with $D+1$ degrees of freedom and scale matrix $\bm{I}$ induce the same marginally uniform prior, i.e.~(\ref{eqn:marginally uniform prior}) with $\nu = D+1$, on the resulting correlation matrix $\bm{R}_{\ba}$, which in turn gives the same implied LKJ distribution on the partial correlations.
\begin{figure}[t!]
\centering
\begin{subfigure}{.475\textwidth}
  \centering
  \includegraphics[trim= 2.25cm 1.75cm 12.25cm 11.5cm,clip,width=0.8\textwidth]{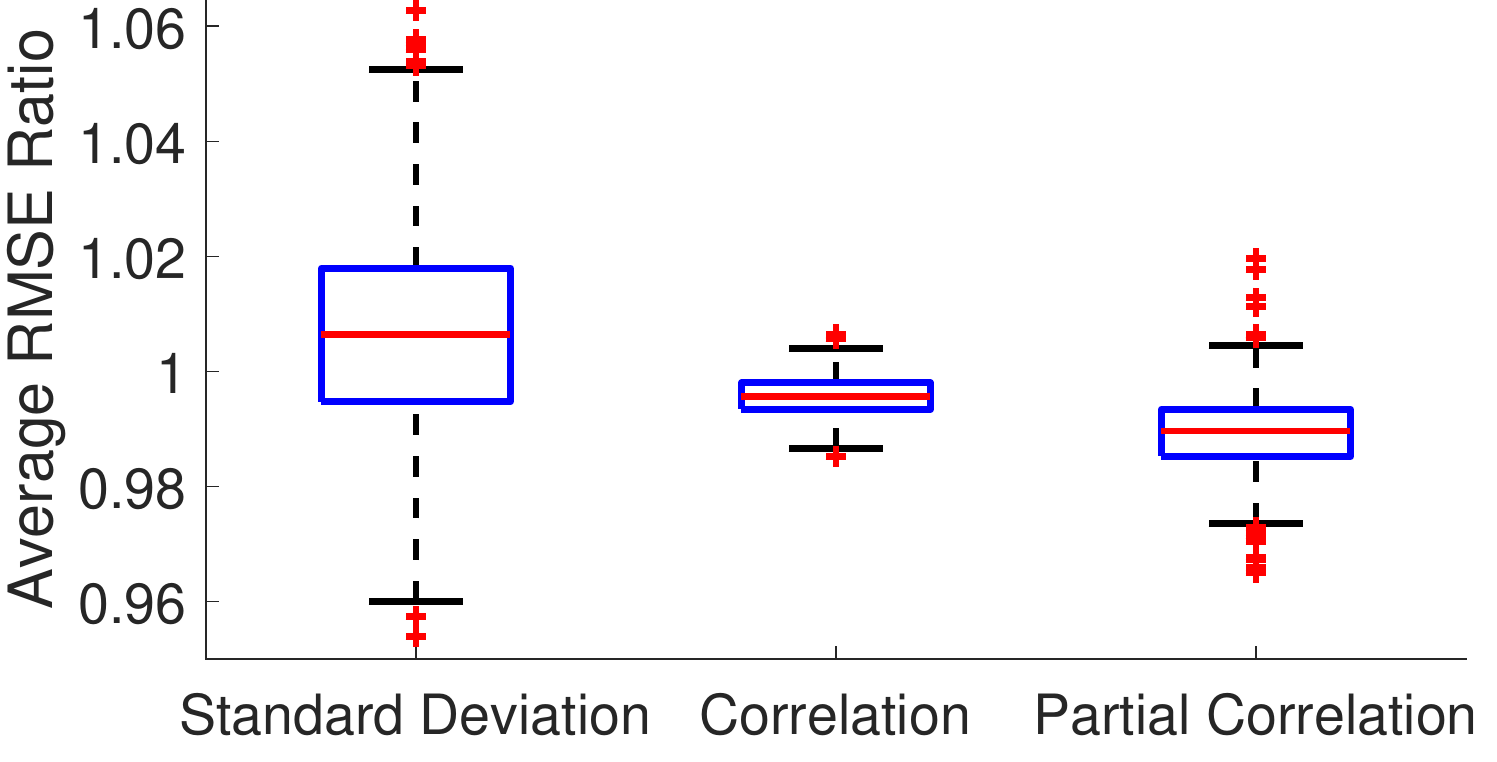}
  \caption{Hierarchical inverse-Wishart prior versus inverse-Wishart prior on $\Sig_{\ba}$.}
\label{fig:IW vs HIW}
\end{subfigure}
\hfill
\begin{subfigure}{.475\textwidth}
  \centering
  \includegraphics[trim= 2.25cm 1.75cm 12.25cm 11.5cm,clip,width=0.8\textwidth]{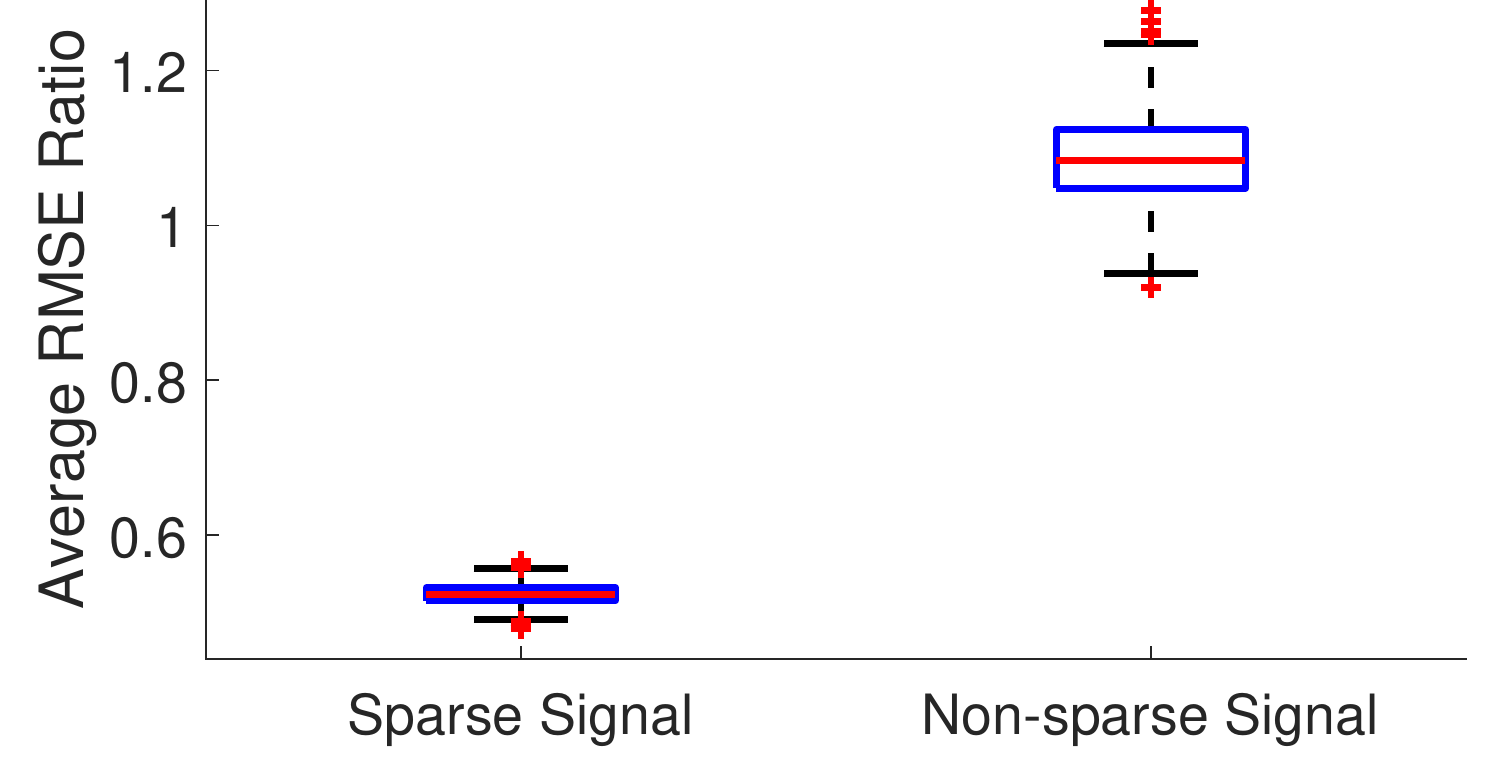}
  \caption{Horseshoe shrinkage prior versus normal prior on $\bb$.}
\label{fig:horseshoe vs normal}
\end{subfigure}
\caption{Distributions of the average posterior RMSE ratio of all parameters in (a) $\Sig_{\ba}$ or (b) $\bb$, based on 1\,000 replicate analyses, under different prior choices. (a) Standard deviations, correlations and partial correlations for parameters in $\Sig_{\ba}$ for the hierarchical inverse-Wishart prior versus the inverse-Wishart prior on $\Sig_{\ba}$. (b) Sparse regression coefficients $\beta_i=0$ and non-sparse coefficients $\beta_i\neq 0$ for the horseshoe prior versus the $\mathcal{N}(\boldsymbol{0},100\boldsymbol{I})$ prior on $\bb$.
}
\label{fig:prior beta and sigma}
\end{figure}

To identify sparse signals (coefficients which are significant) in the regression parameter $\bb$, we employ the horseshoe shrinkage prior \citep{carvalho2010horseshoe} given by
\begin{equation*}
\b_i|\l_i, \tau \sim \mathcal{N}(0, \tau^2 \l_i^2), \quad \l_i \sim \mathcal{C}^+(0,1), \quad \tau \sim \mathcal{C}^+(0,1),
\end{equation*}
where $\mathcal{C}^+(0,1)$ is a half-Cauchy distribution with location 0 and scale 1 restricted to positive support.  The simulation is carried out by setting 75\% of the smallest non-intercept regression coefficients (in absolute value) in $\bb$ to 0, from which we generate the simulated datasets.  We model the prior on each intercept separately by a flat $\mathcal{N}(0, 100)$ distribution to avoid heavily penalising these parameters.  Gibbs sampling from the posterior distribution of $\bb$ is implemented by adopting the latent variable formulation in \cite{makalic2016simple}.  Figure~\ref{fig:horseshoe vs normal} displays the results of comparing this prior specification for $\bb$ to a $\mathcal{N}({\bf 0}, 100\bm{I})$ prior, again in terms of the average RMSE ratio over all regression parameters.  The horseshoe prior performs as well as the $\mathcal{N}({\bf 0}, 100\bm{I})$ prior on non-zero entries of $\bb$, although the variability in the RMSE ratio is large.  On the other hand, the horseshoe prior outperforms the normal prior for those parameters whose true values are zero, reducing the RMSE by half.  This occurs as the horseshoe prior places a greater density around zero, which results in a more concentrated posterior distribution for parameters which are truly zero.  Therefore, it is an attractive default option when we expect sparsity in the regression parameters, as is the case for our analysis of the characteristics affecting the decision-making behaviour of GPs in the next section.

\section{Discussion of female contraceptive products by Australian GPs} \label{sec:application}
\subsection{Background and aims of study} \label{subsec:background}

In order to study the decision-making behaviour of Australian GPs, we obtain data from \cite{fiebig2017consideration} who design a stated preference experiment in which GPs are asked to select the contraceptive products that they would consider discussing with hypothetical female patients.  The GPs evaluate a sequence of vignettes where patients are defined in terms of socio-economic and clinical characteristics that are varied as part of the experimental design.  Table~\ref{table:attributes} in Appendix~\ref{app:attributes} contains the attributes of the patients with a description for each level of the categorical variables.  The GPs choose from a set of 9 products that they would discuss with the patient before deciding upon their most preferred product to be subsequently prescribed to the patient.  A sample of 162 GPs participated in the experiment where each subject makes choices for 16 different patients, resulting in 2\,592 observations.  The following covariate information is collected on the GPs themselves: age, gender, whether they are registered as a Fellow of the Royal Australian College of GPs, whether they have a certificate in family planning, whether they are an Australian medical graduate, whether their location of practice is in an urban area and whether they bulk-bill patients.  Analysis of this panel data is based on the set of binary outcomes as to whether or not to discuss each of the contraceptive products.  Due to low occurrences for the prescription of the hormonal patch which was yet to be released in the Australian market, we removed this product from the dataset leaving observations on the 8 remaining products.

The experiment is designed to mimic the choice problem faced by GPs in a consultation where they need to match a product with a particular patient.  In characterising such a decision problem, \cite{frank2007custom} distinguish between ``custom-made'' and ``ready-to-wear'' (or norm-based) choices.  A custom-made choice involves the GP undertaking a careful evaluation of the patient and then matching her to an appropriate product.  However, as new products are introduced, GPs face considerable costs in the process of gaining the knowledge and expertise required to discuss and prescribe these products.  This is particularly the case when more familiar products are available even though they may be somewhat inferior to the new products; an especially salient situation in the market for contraceptive products.  In such cases, some GPs will tend to adopt norms (here particular products) that work well for a broad class of patients and to place less weight on certain patient attributes that would indicate a different product that is potentially a better match.

Particular interest is in the dependence between the products.  That is, which products tend to be discussed together and which tend to form distinct clusters.  If GPs pursue custom-made strategies, then a considerable portion of the dependence between products will be explained by the attributes of the patient.  Conditional on the observable features of the patient and characteristics of the GPs, remaining dependencies will reflect the relationship between unobservables related to evaluations of the suitability of certain products for a particular patient, and how individual GP's product effects are correlated across products.  The proposed model is designed to capture these forms of heterogeneity and will permit a detailed analysis of the choices.

The prevalence of ready-to-wear choices is one possible explanation for the relatively low uptake of long acting reversible contraceptive (LARC) methods in Australia \citep{black2013australian}.  LARC methods are contraceptives that are administered less frequently than monthly and include hormonal implants, intrauterine contraception (IUC), both hormonal and copper-bearing, and contraceptive injections.  There is increasing support for the greater use of these more effective methods to reduce unintended pregnancies and abortion rates.  In our analysis below, we will use the model to explore a case where there is no clinical reason why at least one of these LARC methods should not be considered for discussion by GPs.  For ease of presentation, we will use the subscripts in Table~\ref{table:products} to denote the products.
\begin{table}[ht!]
\centering
\onehalfspacing
\begin{tabular}{|c|l|}
\hline
Subscript & \multicolumn{1}{c|}{Product} \\ \hline
1 & Combined pill \\
2 & Mini-pill \\
\cellcolor{gray!25}3 & \cellcolor{gray!25}Hormonal injection \\
\cellcolor{gray!25}4 & \cellcolor{gray!25}Hormonal implant \\
\cellcolor{gray!25}5 & \cellcolor{gray!25}Hormonal IUD \\
6 & Vaginal ring \\
\cellcolor{gray!25}7 & \cellcolor{gray!25}Copper IUD \\
8 & Condom \\ \hline
\end{tabular}
\caption{Correspondence of parameter subscripts to each female contraceptive product.  Long acting reversible contraceptive methods are shown in grey.}
\label{table:products}
\end{table}

\subsection{Analysis and results} \label{subsec:discussion}

We consider two different models for the data:
\begin{align}
\text{Model 1: }& \by^\ast_{it} = \ba_i + \bB \bx_{it} + \beps_{it}, \label{eqn:model 1}\\
\text{Model 2: }& \by^\ast_{it} = \ba_i + \bB \bx_{it} + \bm{C}  \bz_i + \beps_{it}, \label{eqn:model 2}
\end{align}
for $i=1,\dotsc,P=162$ GPs and $t=1,\dotsc,T=16$ patients.  Here $\ba_i$ and $\bm{C}  \bz_i$ respectively represent GP-specific random and fixed effects with $\bz_i$ being a vector of GP characteristics, and $\bB \bx_{it}$ represents fixed effects of the patient.  We select a horseshoe prior on $\bb=\textrm{vec}(\bB)$ and model the covariance matrix $\Sig_{\ba}$ of the random effects by the $\mathcal{HIW}(2,\bm{A})$ prior in Section~\ref{sec:simulation} where $\bm{A}=(0.23, 0.23, 0.46, \dotsc, 0.46)^\top$.  The scale is chosen to express the prior information that the variances of the random effects are expected to be small, with those for the pill products being less variable compared to the non-pill alternatives.  The difference between these two models is the presence of the GP-specific fixed effects in Model 2, which explain some of the relationships in the random effects of Model 1.  Let $\bm{X}=(X_1, \dotsc, X_D)^\top$ be a vector of normal random variables with covariance matrix given by $\Sig_{\bm{X}}$.  Recall that $X_i$ and $X_j$ are conditionally independent given the other random variables if the $(i,j)$-th entry of the precision matrix $\Sig^{-1}_{\bm{X}}$ is zero.

Figures~\ref{fig:correlation graph} and \ref{fig:random effect graph} give graphical summaries of the posterior distribution of the dependence structures of the latent variable $\by^\ast_{it}$ conditional on $\ba_i$ and $\bx_{it}$ (as well as $\bm{z}_i$ for Model 2), and the random effects $\ba_i$ respectively.  All graphs are obtained by computing the 95\% credible interval of the posterior distribution for each entry of $\bR^{-1}$ and $\Sig^{-1}_{\ba}$, where an edge is formed between two nodes if the credible interval does not include 0. The absence of an edge between any two nodes indicates a potential conditional independence between the two variables given the rest.  The dependence structures associated with the latent variables are the same for both models.  This supports the use of the MVP model in order to capture the complex dependencies between different products that would otherwise be ignored in separate univariate analyses on each product.  
\begin{figure}[ht!]
\centering
\begin{tikzpicture}[>=stealth',shorten >=1pt,auto,thick,main node/.style={circle,draw}]
	\draw (67.5:2cm)   node[main node, scale=0.9] (1) {$y^\ast_1$};
    \draw (22.5:2cm)   node[main node, scale=0.9] (2) {$y^\ast_2$};
    \draw (337.5:2cm)  node[main node, scale=0.9] (3) {$y^\ast_3$};
    \draw (292.5:2cm)  node[main node, scale=0.9] (4) {$y^\ast_4$};
    \draw (247.5:2cm)  node[main node, scale=0.9] (5) {$y^\ast_5$};
    \draw (202.5:2cm)  node[main node, scale=0.9] (6) {$y^\ast_6$};
    \draw (157.5:2cm)  node[main node, scale=0.9] (7) {$y^\ast_7$};
    \draw (112.5:2cm)  node[main node, scale=0.9] (8) {$y^\ast_8$};
    \path[every node/.style={font=\sffamily\small}]
    (1) edge[red, line width=0.08cm]  node [left] {} (5)
        edge[blue, line width=0.16cm] node [left] {} (6)
    (2) edge[blue, line width=0.055cm] node [left] {} (3)
        edge[red, line width=0.091cm]  node [left] {} (6)
    (3) edge[blue, line width=0.055cm] node [left] {} (2)
    	edge[blue, line width=0.2cm] node [left] {} (4)
    (4) edge[blue, line width=0.2cm] node [left] {} (3)
    	edge[blue, line width=0.082cm] node [left] {} (5)
    (5) edge[red, line width=0.08cm]  node [left] {} (1)
    	edge[blue, line width=0.082cm] node [left] {} (4)
		edge[blue, line width=0.176cm] node [left] {} (7)
        edge[red, line width=0.026cm]  node [left] {} (8)
    (6) edge[blue, line width=0.16cm] node [left] {} (1)
    	edge[red, line width=0.091cm]  node [left] {} (2)
    	edge[blue, line width=0.062cm] node [left] {} (8)
    (7) edge[blue, line width=0.176cm] node [left] {} (5)
    	edge[blue, line width=0.07cm] node [left] {} (8)
    (8) edge[red, line width=0.026cm]  node [left] {} (5)
    	edge[blue, line width=0.062cm] node [left] {} (6)
        edge[blue, line width=0.07cm] node [left] {} (7);
\end{tikzpicture}
\caption{Graphical model illustrating substantial dependence structure of the latent variables $\by^\ast$ conditional on the random effects and the covariates in both Model 1 and 2.  Edges between $y^\ast_i$ and $y^\ast_j$ are included if the 95\% credible interval of the marginal posterior distribution of the $(i,j)$-th entry of $\bR^{-1}$ does not contain 0.  Blue edges represent positive dependence while red edges represent negative dependence.  The thickness of the edges is proportional to the strength of the dependence.}
\label{fig:correlation graph}
\vspace{10pt}
\begin{subfigure}[b]{0.45\textwidth}
\begin{tikzpicture}[>=stealth',shorten >=1pt,auto,thick,main node/.style={circle,draw}]
	\draw (67.5:2cm)  node[main node, scale=0.9] (1) {$\a_1$};
    \draw (22.5:2cm)  node[main node, scale=0.9] (2) {$\a_2$};
    \draw (337.5:2cm) node[main node, scale=0.9] (3) {$\a_3$};
    \draw (292.5:2cm) node[main node, scale=0.9] (4) {$\a_4$};
    \draw (247.5:2cm) node[main node, scale=0.9] (5) {$\a_5$};
    \draw (202.5:2cm) node[main node, scale=0.9] (6) {$\a_6$};
    \draw (157.5:2cm) node[main node, scale=0.9] (7) {$\a_7$};
    \draw (112.5:2cm) node[main node, scale=0.9] (8) {$\a_8$};
    \path[every node/.style={font=\sffamily\small}]
    (1) edge[blue, line width=0.17cm] node [left] {} (2)
    	edge[blue, line width=0.114cm] node [left] {} (3)
    (2) edge[blue, line width=0.17cm] node [left] {} (1)
    (3) edge[blue, line width=0.114cm] node [left] {} (1)
    	edge[blue, line width=0.074cm] node [left] {} (7)
    (4) edge[blue, line width=0.141cm] node [left] {} (5)
    	edge[red, line width=0.063cm]  node [left] {} (7)
    (5) edge[blue, line width=0.141cm] node [left] {} (4)
    (6) edge[blue, line width=0.098cm] node [left] {} (8)
    (7) edge[blue, line width=0.074cm] node [left] {} (3)
    	edge[red, line width=0.063cm]  node [left] {} (4)
    	edge[blue, line width=0.068cm] node [left] {} (8)
    (8) edge[blue, line width=0.098cm] node [left] {} (6)
    	edge[blue, line width=0.068cm] node [left] {} (7);
\end{tikzpicture}
\centering
\caption{Model 1.}
\label{fig:model 1 graph}
\end{subfigure}
\centering
\begin{subfigure}[b]{0.45\textwidth}
\begin{tikzpicture}[>=stealth',shorten >=1pt,auto,thick,main node/.style={circle,draw}]
	\draw (67.5:2cm)  node[main node, scale=0.9] (1) {$\a_1$};
    \draw (22.5:2cm)  node[main node, scale=0.9] (2) {$\a_2$};
    \draw (337.5:2cm) node[main node, scale=0.9] (3) {$\a_3$};
    \draw (292.5:2cm) node[main node, scale=0.9] (4) {$\a_4$};
    \draw (247.5:2cm) node[main node, scale=0.9] (5) {$\a_5$};
    \draw (202.5:2cm) node[main node, scale=0.9] (6) {$\a_6$};
    \draw (157.5:2cm) node[main node, scale=0.9] (7) {$\a_7$};
    \draw (112.5:2cm) node[main node, scale=0.9] (8) {$\a_8$};
    \path[every node/.style={font=\sffamily\small}]
    (1) edge[blue, line width=0.165cm] node [left] {} (2)
    (2) edge[blue, line width=0.165cm] node [left] {} (1)
    (4) edge[blue, line width=0.134cm] node [left] {} (5)
    	edge[red, line width=0.042cm]  node [left] {} (7)
    (5) edge[blue, line width=0.134cm] node [left] {} (4)
    (6) edge[blue, line width=0.119cm] node [left] {} (8)
    (7) edge[red, line width=0.042cm]  node [left] {} (4)
    	edge[blue, line width=0.086cm] node [left] {} (8)
    (8) edge[blue, line width=0.119cm] node [left] {} (6)
    	edge[blue, line width=0.086cm] node [left] {} (7);
\end{tikzpicture}
\centering
\caption{Model 2.}
\label{fig:model 2 graph}
\end{subfigure}
\caption{Graphical models illustrating substantial dependence structure of the GP-specific random effects $\ba$ in each model.  Edges between $\a_i$ and $\a_j$ are included if the 95\% credible interval of the marginal posterior distribution of the $(i,j)$-th entry of $\Sig_{\ba}^{-1}$ does not contain 0.  Blue edges represent positive dependence while red edges represent negative dependence.  The thickness of the edges is proportional to the strength of the dependence.}
\label{fig:random effect graph}
\end{figure}

Figure~\ref{fig:correlation graph} is also instrumental in explaining the suitability of the contraceptive products for a patient in terms of substitute goods, which in consumer theory is defined as products with similar functions that can be used in place of each other.  For conciseness, we only focus on some important relationships illustrated in the graphical model.  The propensity to discuss pill products $(y^\ast_1, y^\ast_2)$ are independent of each other given the hormonal IUD and the vaginal ring $(y^\ast_5, y^\ast_6)$ by the Markov property since all paths from $y^\ast_1$ to $y^\ast_2$ pass through $(y^\ast_5, y^\ast_6)$, reflecting the use of these non-pill contraceptives as pill alternatives dictated by particular clinical conditions.  The clique formed between $(y^\ast_5, y^\ast_7, y^\ast_8)$ suggests dependence in the propensity to discuss the hormonal IUD, the copper IUD and the condoms.  In fact, the posterior correlation between the propensity scores for both the IUD methods $(y^\ast_5, y^\ast_7)$ is around 0.52 on average (see Appendix~\ref{app:correlation}), suggesting a high tendency of these products to be discussed together.  This also reflects the fact that these IUD methods are substitutes.  Noticeably, the propensity to discuss the hormonal injection and the hormonal implant $(y^\ast_4, y^\ast_5)$ exhibit the highest level of association as indicated by our model, with a mean posterior correlation of 0.59.  This indicates the likelihood of these two prominent LARC products being included together in discussions, and it is consistent with them being close substitutes for each other for many patients.

Figure~\ref{fig:random effect graph} can be interpreted in the same way as Figure~\ref{fig:correlation graph}, regarding the substitutability of different products but in the context of ready-to-wear choices.  This is because the random effects in (\ref{eqn:model 1}) characterise the persistence of GPs in discussing a particular product after observing the patient's attributes.  There are clear differences in the graphical structure when comparing Figures~\ref{fig:model 1 graph} and \ref{fig:model 2 graph}.  The changes in the dependence structure of the GP random effects arise because some of the persistence in product choices can be explained by GP characteristics.  For example, the tendency of GPs to include both the hormonal injection and the copper IUD $(\a_3, \a_7)$ as ready-to-wear choices is due to their age (see significance of GP characteristics in Appendix~\ref{app:estimation result patient}).  The posterior structure also provides some confidence that the random effects specification is useful in capturing important GP characteristics that are not directly observed.  Three clusters of products with substantial dependence in ready-to-wear choices are identified from the model after accounting for the observed GP characteristics.  Particularly relevant is the dependence between the hormonal IUD and the implant $(\a_4, \a_5)$.  There is positive correlation between these two LARCs, indicating the tendency for GP attitudes (either positive or negative) to be aligned.  A second cluster includes both of the pills $(\a_1, \a_2)$ which is consistent with these products being used as a ready-to-wear default.  GPs who are more likely to discuss the combined pill after conditioning on the patient's attributes behave similarly when considering the mini-pill.  Contraceptives that are not pill- or hormone-based form the final bundle.

Our models allow us to examine posterior predictions for a range of patients.  Since we are interested in the uptake of LARC products, we specify a particular female patient where there is no clinical reason why a LARC should not be considered for discussion.  Table~\ref{table:attributes} of Appendix~\ref{app:attributes} gives the attributes of this base-case patient.  Figure~\ref{fig:marginal probability} summarises the estimate of the predictive probability of a GP discussing a particular product, where the range of predictions shown is generated for all GPs in the sample based on Model 2.  For this particular base-case patient, there is considerable agreement amongst all GPs in the sample that the combined pill (product 1) is one of the most suitable products to be discussed, but they have much more variable views on the other products.  Amongst the LARCs (products 3, 4, 5 and 7), the hormonal injection (product 3) and the implant (product 4) are the products which are the most likely to be discussed, with the variability across GPs perhaps simply reflecting a view that they are good substitutes to each other, which is in fact what we find in Figure~\ref{fig:correlation graph}.  GPs could indeed have consistent views about the need to discuss LARCs, as they do with the combined pill, but they are divided on which of the LARC products to discuss.  To explore this possibility, the final column in Figure~\ref{fig:marginal probability} shows the predicted probability of the GPs discussing at least one of these two products, that is $\P(y_3 + y_4 \geq 1)$.  The results suggest that the GPs will discuss either product 3 or 4 (or both) with similar probability to the combined pill.  While this joint probability does indicate a median that is similar to that of discussing the combined pill, the variability across GPs remains much larger than that associated with the combined pill.  This evidence is consistent with the hypothesised resistance amongst some GPs to even discuss LARCs, let alone recommend them.
\begin{figure}[t!]
\onehalfspacing
\centering
\includegraphics[trim= 2cm 1.5cm 2cm 3.5cm,clip,width=0.6\textwidth]{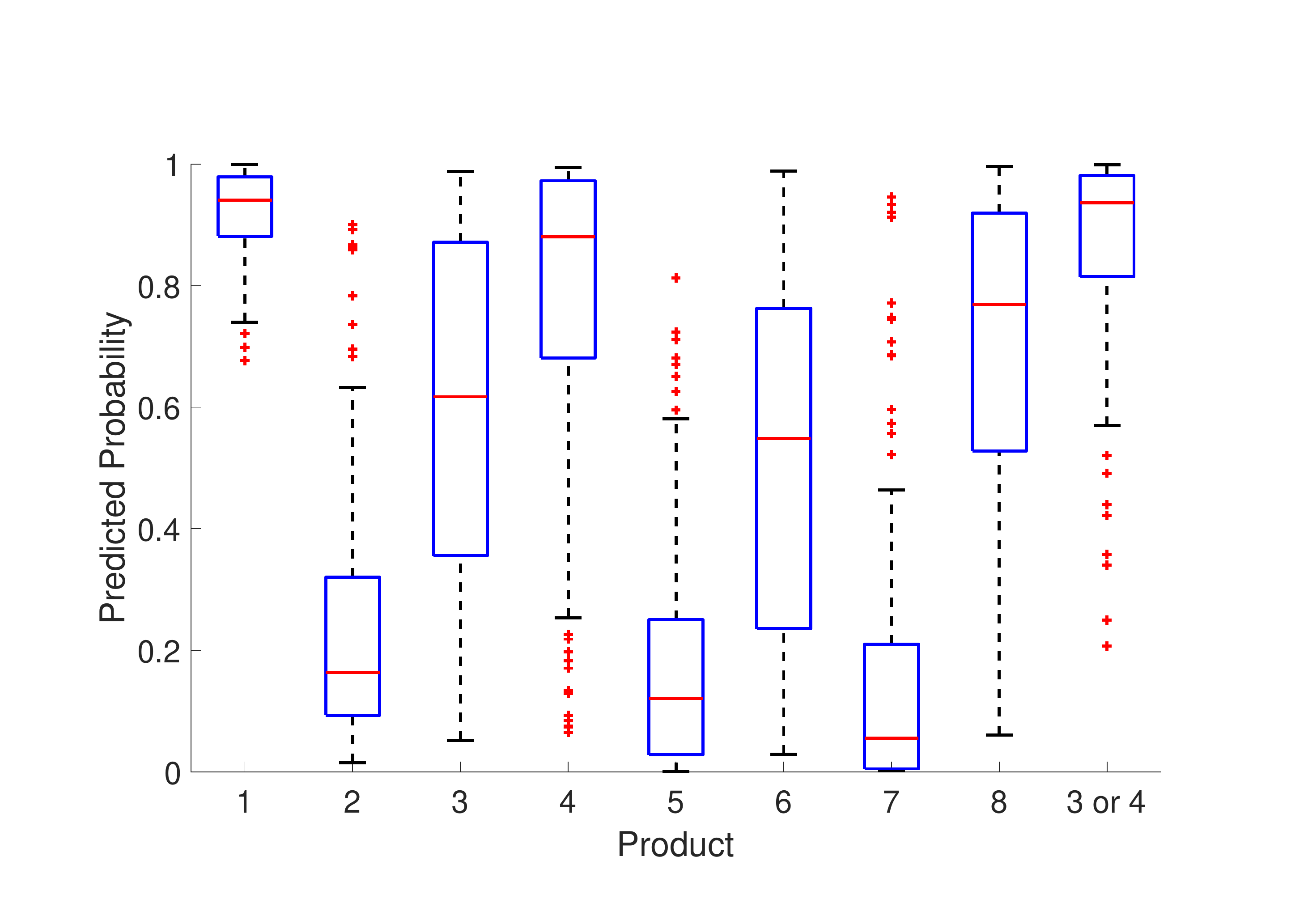}
\caption{Predicted probability of a GP discussing each product for a base-case patient for each of the 162 Australian GPs.}
\label{fig:marginal probability}
\end{figure}

\subsection{Comparing sampling schemes}

In order to investigate the performance of the antithetic sampler, Figure~\ref{fig:posterior application} illustrates marginal posterior distributions of those Model 2 parameters whose densities demonstrate the greatest visual differences between independent and antithetic sampling of the random effects $\ba_{1:P}$ and regression parameters $\bb$.  The marginal posterior distributions of $\a_{7,110}$ and $\b_{236}$ are effectively the same under both updating approaches.  This occurs because the mean of the conditional posterior distribution, which is a key ingredient in the deterministic antithetic sampler proposal, changes between iterations; a change largely driven by the stochastic update of the latent variable $\by^\ast$.  This outcome suggests that the posterior distribution of the other parameters remains adequately explored by the antithetic sampler.
\begin{figure}[t!]
    \centering
    \includegraphics[trim= 0.5cm 2.25cm 0.5cm 8cm,clip,width=0.8\textwidth]{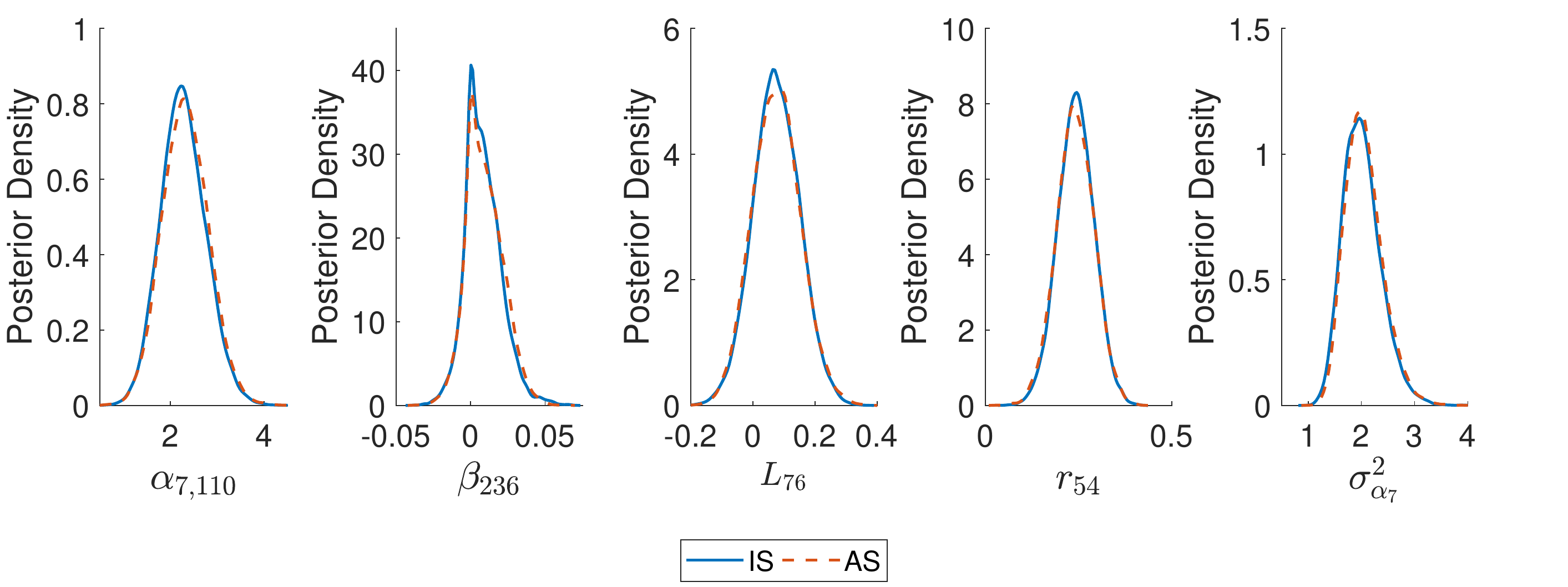}
    \caption{Marginal posterior density estimates of those Model 2 parameters with the greatest visual differences  between using independent sampling (IS) and antithetic sampling (AS) for $\ba_{1:P}$ and $\bb$.}
    \label{fig:posterior application}
\end{figure}

Table~\ref{table:IACT model 2} compares the performance between independent and antithetic sampling schemes when estimating Model 2.  The antithetic variable method generates samples marginally faster than independent sampling because it is deterministic.  Based on the results shown, we observe an improvement of 4.86 and 3.31 times performance gain on average in the mixing of $\ba_{1:P}$ and $\bb$ respectively.  As a result of this, the mean IACT of $\by^\ast$ is also improved.
\begin{table}[t!]
\centering
\doublespacing
\begin{tabular}{|c|r|r|r|r|r|} 
\hline
\multirow{2}{*}{Parameter} & \multicolumn{2}{c|}{Mean IACT} & \multicolumn{3}{c|}{IACT Ratio} \\ \cline{2-6}
 & \multicolumn{1}{c|}{IS} & \multicolumn{1}{c|}{AS} & \multicolumn{1}{c|}{Min} & \multicolumn{1}{c|}{Max} & \multicolumn{1}{c|}{Mean} \\ \hline
$\by^\ast$             & 3.6387  & 2.6686  & 0.8242 & 3.1419  & 1.2127 \\
$\ba_{1:P}$            & 16.8872 & 4.6456  & 1.4857 & 13.3424 & 4.8632 \\
$\bb$                  & 15.0446 & 4.0105  & 1.4566 & 16.0173 & 3.3111 \\
$\vechL(\Leps)$        & 14.8292 & 14.5422 & 0.9338 & 1.1737  & 1.0191 \\ 
$\vechL(\bR)$          & 12.7311 & 12.5170 & 0.9147 & 1.1509  & 1.0180 \\
$\diag(\Sig_{\ba})$    & 24.8056 & 14.6929 & 1.3130 & 2.0651  & 1.7222 \\ 
$\vechL(\bm{R}_{\ba})$ & 9.5025  & 5.1716  & 1.4599 & 2.3336  & 1.8424 \\ \hline
Time per iteration & 0.0243 & 0.0239 & \multicolumn{1}{c|}{-} & \multicolumn{1}{c|}{-} & \multicolumn{1}{c|}{-} \\ \hline
\end{tabular}
\caption{Comparison of the performance between independent sampling (IS) and antithetic sampling (AS) in the contraceptive products preference data in terms of the speed (seconds per iteration), the mean IACT and the IACT ratio for each block of parameter.}
\label{table:IACT model 2}
\end{table}

\section{Conclusion} \label{sec:conclusion}

Many methods exist for fitting a multinomial logit model with random effects, such as simulated maximum likelihood \citep{gong2004mobility}, quadrature \citep{hartzel2001multinomial,hedeker2003mixed}, multinomial-Poisson transformation \citep{lee2017poisson}, and moment-based estimation \citep{perry2017fast}, among others.  Computational strategies for the MVP model, on the other hand, are less well studied.  In this article, we introduce a HMC sampling approach to generate the posterior samples of $\bR$.  This method requires reparameterising $\bR$ into an unconstrained Cholesky factor in order to circumvent the restrictive properties of a correlation matrix having diagonal entries of 1 and being positive definite.  Furthermore, we propose a novel antithetic variable technique to accelerate the mixing of the random effects and the regression parameters, where significant gains in efficiency are observed in our application.  Although our antithetic sampling deterministically specifies the proposal distribution within the Metropolis-Hastings update, the ergodicity of the Markov chain is unaffected when it is embedded within a larger system of stochastic updates.

Our application considers the discussion of female contraceptive products by Australian GPs based on outcomes from the second stage of the stated preference data from \cite{fiebig2017consideration}.  An examination of the correlation matrix underlying the choices reveals a complex dependence structure between the products, hence indicating the plausibility of our formulation to model these choices in a multivariate setting.  Our empirical study also suggests evidence of medical practice variation among the GPs, especially with regard to the inclusion of LARCs in the discussion with patients.  The combined pill was the most popular contraceptive choice among the patients, and it represented a likely ready-to-wear default for many GPs.  Without GPs even discussing LARCs, their uptake was likely to remain relatively constrained in such a context.

\section*{Acknowledgements}

David Gunawan, Denzil Fiebig and Robert Kohn were partially supported by the Australian Research Council Discovery Project scheme DP150104630 and Scott Sisson was partially supported by the Discovery Project grant DP160102544.  Vincent Chin, David Gunawan, Robert Kohn and Scott Sisson were also partially supported by the Australian Centre of Excellence for Mathematical \& Statistical Frontiers (ACEMS) grant CE140100049.

\bibliographystyle{chicago}
\interlinepenalty=10000
\bibliography{biblio}

\newpage
\appendix

\section{Sampling scheme for the MVP model with random effects} \label{app:sampling scheme}

Suppose that we choose the following prior distributions: $\bb \sim \mathcal{N}({\bf 0}, \bm{\Psi}_{\bb}), \Sig_{\ba} \sim \mathcal{IW}(\l_{\Sig}, \bm{\Psi}_{\Sig})$ and the prior distribution on the lower triangular Cholesky factor $\Leps$ in (\ref{eqn:prior cholesky}) with $\nu=D+1$.  Let $\bt=(\bb, \Leps, \Sig_{\ba})$.  Equation (\ref{eqn:posterior}) gives the posterior distribution of interest under the data augmentation approach where we update $\by^\ast, \ba_{1:P}$ and each component of $\bt$ using Gibbs sampling.  For notational clarity, we will drop the superscript which indicates the sequence of the samples in a Markov chain where necessary.

\noindent
\newline
\underline{{\bf Step 1}: Updating $\by^\ast$}

\noindent
For $d=1,\dotsc,D$, sample $\by^\ast$ conditionally one-at-a-time following \cite{geweke1991efficient}, i.e.
\begin{equation*}
y^\ast_{d,it}|\ba_{1:P}, \bt, \by^\ast_{-d,it}, y_{d,it} \sim 
\begin{cases}
\mathcal{TN}_{(-\infty,0]}(\m^{(d|-d)}_{d,it}, \sigma^{(d|-d)}_{d,it}) & \text{if } y_{d,it}=0 \\
\mathcal{TN}_{(0,\infty)}(\m^{(d|-d)}_{d,it}, \sigma^{(d|-d)}_{d,it}) & \text{if } y_{d,it}=1
\end{cases}
\end{equation*}
where $\by^\ast_{-d,it}=(y_{1,it}, \dotsc, y_{d-1,it}, y_{d+1,it}, \dotsc, y_{D,it})^\top$, $\m^{(d|-d)}_{d,it}$ and $\sigma^{(d|-d)}_{d,it}$ are the univariate $d$-th dimension conditional mean and conditional standard deviation respectively for the $\N(\bmu_{it}, \bR)$ distribution and $\mathcal{TN}_{(a,b)}$ is a univariate normal distribution truncated to the interval $(a,b)$.

\noindent
\newline
\underline{{\bf Step 2}: Updating $\bb$}

\noindent
Compute the posterior mean $\bmu_{\bb}$ and the posterior covariance matrix $\Sig_{\bb}$ for $\bb$ as
\begin{equation*}
\begin{gathered}
\Sig_{\bb} = \Bigg(\sum_{i=1}^P \sum_{t=1}^T (\bm{I} \otimes \bx_{it}) \bR^{-1} (\bm{I} \otimes \bx_{it})^\top + \bm{\Psi}_{\bb}^{-1} \Bigg)^{-1}, \\
\bmu_{\bb} = \Sig_{\bb} \Bigg( \sum_{i=1}^P \sum_{t=1}^T (\bm{I} \otimes \bx_{it}) \bR^{-1} (\by^\ast_{it} - \ba_i) \Bigg),
\end{gathered}
\end{equation*}
where $\otimes$ denotes the Kronecker product and set $\bb^{[j+1]}=2\bmu_{\bb}-\bb^{[j]}$ deterministically.  If a horseshoe prior is specified on $\bb$ instead, its update is the same by first sampling $\diag(\bm{\Psi}_{\bb})$ conditional on the local shrinkage parameters $\l_i$ and global shrinkage parameter $\tau$ (see \cite{makalic2016simple} for details).

\noindent
\newline
\underline{{\bf Step 3}: Updating $\Leps$}

\noindent
Sample $\Leps$ using the NUTS algorithm and obtain the correlation matrix $\bR$ from the
relationship in (\ref{eqn:reparameterisation}).

\noindent
\newline
\underline{{\bf Step 4}: Updating $\ba_{1:P}$}

\noindent
For $i=1, \dotsc, P$, compute the posterior mean $\bmu_{\ba_i}$ and the posterior covariance matrix $\tilde{\Sig}_{\ba}$ for the random effects $\ba_i$ as
\begin{equation*}
\begin{gathered}
\tilde{\Sig}_{\ba} = \big(T\bR^{-1} + \Sig_{\ba}^{-1} \big)^{-1}, \\
\bmu_{\ba_i} = \tilde{\Sig}_{\ba} \Bigg(\bR^{-1} \sum_{t=1}^T \by^\ast_{it} - \bB \bx_{it} \Bigg),
\end{gathered}
\end{equation*}
and set $\ba^{[j+1]}_i = 2 \bmu_{\ba_i} - \ba^{[j]}_i$ deterministically.

\noindent
\newline
\underline{{\bf Step 5}: Updating $\Sig_{\ba}$}

\noindent
Sample
\begin{equation*}
\Sig_{\ba} \sim \mathcal{IW} \Bigg(\l_{\Sig}+P, \sum_{i=1}^P \ba_i \ba_i^\top + \Psi_{\Sig} \Bigg).
\end{equation*}
Suppose that a $\mathcal{HIW}(\l_{\Sig}, \bm{A})$ prior with scales $\bm{A}$ is used for $\Sig_{\ba}$. Sample
\begin{equation*}
\begin{gathered}
a_i \sim \mathcal{IG} \bigg(\frac{\l_{\Sig}+D}{2}, \l_{\Sig} \Sig_{\ba}^{-1}(i;i) + \frac{1}{A^2_i} \bigg), \\
\Sig_{\ba} \sim \mathcal{IW} \Bigg(\l_{\Sig}+P+D-1, \sum_{i=1}^P \ba_i \ba_i^\top + 2\l_{\Sig} \diag \bigg(\frac{1}{a_1}, \dotsc, \frac{1}{a_D} \bigg) \Bigg),
\end{gathered}
\end{equation*}
where $\Sig_{\ba}^{-1}(i;i)$ is the $i$-th diagonal entry of the precision matrix $\Sig_{\ba}^{-1}$.

\section{Attributes of the patient in the Australian GP data} \label{app:attributes}

\begin{table}[ht!]
\centering
\resizebox*{!}{\dimexpr\textheight-10\baselineskip\relax}{
\begin{tabular}{|c|c|l|}
\hline
Attribute & Variable & \multicolumn{1}{c|}{Description} \\ \hline
\multirow{4}{*}{Age} & dagegp1 & Aged 16-19 years \\ & \cellcolor{gray!25}dagegp2 & \cellcolor{gray!25}Aged 20-29 years \\
 & dagegp3 & Aged 30-39 years \\
 & dagegp4 & Aged 40 years or more \\
\hline
\multirow{4}{*}{Reason for encounter} & \cellcolor{gray!25}drfe1 & \cellcolor{gray!25}Starting prescribed contraception for first time \\
 & drfe2 & Recommencing prescribed contraception \\
 & drfe3 & On pill but dissatisfied \\
 & drfe4 & Using non-pill method but dissatisfied \\
\hline
\multirow{3}{*}{Periods} & dbleed1 & Heavy and/or painful periods \\
 & dbleed2 & Irregular periods \\
 & \cellcolor{gray!25}dbleed3 & \cellcolor{gray!25}No problems with periods \\
\hline
\multirow{3}{*}{Blood pressure} & dbp1 & Has low blood pressure \\
 & \cellcolor{gray!25}dbp2 & \cellcolor{gray!25}Has normal blood pressure \\
 & dbp3 & Elevated blood pressure \\
\hline
\multirow{4}{*}{Relationship} & drel1 & In long-standing relationship \\
 & \cellcolor{gray!25}drel2 & \cellcolor{gray!25}In new relationship \\
 & drel3 & Has no steady relationship \\
 & drel4 & No information about relationship \\
\hline
\multirow{3}{*}{Children} & dchild1 & Is currently breastfeeding \\
 & dchild2 & Has children but is not breastfeeding \\
 & \cellcolor{gray!25}dchild3 & \cellcolor{gray!25}Has no children \\
\hline
\multirow{4}{*}{Fertility plans} & dfut1 & Does not want to have children in future \\
 & dfut2 & Plans to have children in next 2 years \\
 & \cellcolor{gray!25}dfut3 & \cellcolor{gray!25}Plans to have children but not in next 2 years \\
 & dfut4 & Unsure about future fertility plans \\
\hline
\multirow{3}{*}{Pill preference} & dpil1 & Prefer pill to other methods \\
 & \cellcolor{gray!25}dpil2 & \cellcolor{gray!25}Has no strong opinion about pill \\
 & dpil3 & Prefers methods other than pill \\
\hline
\multirow{2}{*}{Weight concern} & dwt1 & Is concerned about gaining weight \\
 & \cellcolor{gray!25}dwt2 & \cellcolor{gray!25}Is not concerned about gaining weight \\
\hline
\multirow{2}{*}{Compliance} & \cellcolor{gray!25}dcomp1 & \cellcolor{gray!25}Has no difficulty with compliance \\
 & dcomp2 & Has difficulty with compliance \\
\hline
\multirow{3}{*}{Income} & \cellcolor{gray!25}dpay1 & \cellcolor{gray!25}Has a low to middle household income \\
 & dpay2 & Has a health care card \\
 & dpay3 & Has a high household income \\
\hline
\multirow{3}{*}{Smoking} & \cellcolor{gray!25}dsmk1 & \cellcolor{gray!25}Is a non-smoker \\
 & dsmk2 & Smokes less than 10 cigarettes per day \\
 & dsmk3 & Smokes 10 or more cigarettes per day \\
\hline
\end{tabular}
}
\caption{Categorical variables in the contraceptive discussion data with a text description for each level of attribute.  Levels in grey define the attributes of a base-case patient.}
\label{table:attributes}
\end{table}

\section{Posterior means of the patient and GP fixed effects in the Australian GP data based on Model 2} \label{app:estimation result patient}

\begin{table}[ht!]
\centering
\resizebox*{\textwidth}{!}{
\begin{tabular}{|c|c|rrrrrrrr|}
\hline
& \multirow{2}{*}{Variable} & \multicolumn{8}{c|}{Product} \\ \cline{3-10}
& & \multicolumn{1}{c}{1}   & \multicolumn{1}{c}{2}   & \multicolumn{1}{c}{3}   & \multicolumn{1}{c}{4}   & \multicolumn{1}{c}{5}   & \multicolumn{1}{c}{6}   & \multicolumn{1}{c}{7}   & \multicolumn{1}{c|}{8}   \\ \hline
\parbox[t]{2mm}{\multirow{27}{*}{\rotatebox[origin=c]{90}{Patient}}} & Intercept & \cellcolor{gray!25}1.4161  & \cellcolor{gray!25}-1.2576 & -0.3964 & \cellcolor{gray!25}1.0991  & \cellcolor{gray!25}-2.3943 & -0.1142 & \cellcolor{gray!25}-1.7657 & 0.6918  \\
& dagegp1   & \cellcolor{gray!25}0.1949  & -0.1329 & 0.0104  & 0.0744  & \cellcolor{gray!25}-0.5063 & -0.0205 & \cellcolor{gray!25}-0.2880 & 0.0637  \\
& dagegp3   & -0.1326 & 0.0621  & -0.0624 & -0.0002 & \cellcolor{gray!25}0.3173  & -0.0037 & 0.0906  & 0.0108  \\
& dagegp4   & \cellcolor{gray!25}-0.3936 & \cellcolor{gray!25}0.1851  & \cellcolor{gray!25}-0.2406 & -0.1041 & \cellcolor{gray!25}0.8095  & -0.0270 & \cellcolor{gray!25}0.3849  & 0.0013  \\
& drfe2     & -0.0426 & 0.0008  & -0.0388 & -0.0144 & 0.0441  & -0.0188 & -0.0449 & 0.0068  \\
& drfe3     & \cellcolor{gray!25}-0.2464 & -0.0541 & 0.0270  & 0.0788  & 0.0940  & 0.1069  & -0.0248 & \cellcolor{gray!25}0.1364  \\
& drfe4     & -0.0206 & 0.1042  & -0.0099 & 0.0516  & 0.0678  & 0.0719  & -0.0702 & 0.0056  \\
& dbleed1   & 0.0493  & \cellcolor{gray!25}-0.1363 & 0.0615  & -0.0869 & \cellcolor{gray!25}0.4000  & -0.0256 & \cellcolor{gray!25}-0.5274 & \cellcolor{gray!25}-0.2311 \\
& dbleed2   & 0.0160  & -0.0763 & 0.0213  & -0.0222 & 0.0070  & 0.0408  & -0.0869 & -0.0254 \\
& dbp1      & -0.0599 & -0.0011  & -0.0300 & 0.0292  & 0.0040  & 0.0317  & -0.0221 & \cellcolor{gray!25}-0.1433 \\
& dbp3      & \cellcolor{gray!25}-0.9956 & \cellcolor{gray!25}0.2444  & 0.0070  & 0.0135  & \cellcolor{gray!25}0.2375  & \cellcolor{gray!25}-0.2959 & \cellcolor{gray!25}0.2561  & 0.0347  \\
& drel1     & 0.0436  & -0.0102 & -0.0963 & -0.0020 & \cellcolor{gray!25}0.1570  & 0.0314  & 0.0282  & \cellcolor{gray!25}-0.3971 \\
& drel3     & -0.0141 & 0.0269  & -0.0208 & 0.0002  & -0.0271 & 0.0090  & -0.0186 & 0.0198  \\
& drel4     & -0.0914 & 0.0879  & 0.0667  & -0.0009 & -0.0101 & 0.0294  & 0.0029  & \cellcolor{gray!25}-0.2035 \\
& dchild1   & \cellcolor{gray!25}-1.7437 & \cellcolor{gray!25}1.3074  & -0.0082 & -0.0889 & \cellcolor{gray!25}0.9236  & \cellcolor{gray!25}-0.9909 & \cellcolor{gray!25}0.5354  & -0.0371 \\
& dchild2   & -0.0458 & 0.0344  & -0.0632 & -0.0403 & \cellcolor{gray!25}0.9850  & -0.0498 & \cellcolor{gray!25}0.6007  & -0.0543 \\
& dfut1     & \cellcolor{gray!25}-0.3206 & -0.0043 & \cellcolor{gray!25}0.1978  & 0.0245  & \cellcolor{gray!25}0.6323  & -0.0786 & \cellcolor{gray!25}0.2120  & -0.1143 \\
& dfut2     & \cellcolor{gray!25}-0.2861 & \cellcolor{gray!25}0.1936  & \cellcolor{gray!25}-0.2169 & \cellcolor{gray!25}-0.1996 & -0.0068 & 0.0359  & -0.1438 & 0.0116  \\
& dfut4     & \cellcolor{gray!25}-0.3591 & 0.0485  & 0.0470  & 0.0099  & \cellcolor{gray!25}0.2882  & 0.0067  & 0.0150  & 0.0323  \\
& dpil1     & \cellcolor{gray!25}0.4724  & \cellcolor{gray!25}0.3662  & -0.0948 & \cellcolor{gray!25}-0.2629 & -0.0120 & -0.0331 & -0.0430 & -0.0287 \\
& dpil3     & \cellcolor{gray!25}-0.1878 & \cellcolor{gray!25}-0.2417 & 0.0289  & 0.0618  & 0.0538  & 0.0329  & 0.0457  & 0.0814  \\
& dwt1      & 0.0831  & 0.0374  & \cellcolor{gray!25}-0.2582 & -0.0624 & 0.0318  & 0.0652  & -0.0130 & 0.0815  \\
& dcomp2    & \cellcolor{gray!25}-0.3401 & \cellcolor{gray!25}-0.1988 & \cellcolor{gray!25}0.2152  & 0.0642  & \cellcolor{gray!25}0.2321  & -0.0033 & \cellcolor{gray!25}0.3133  & -0.0162 \\
& dpay2     & -0.0253 & -0.0558 & -0.0204 & -0.0026 & 0.0084  & 0.0595  & 0.0082  & 0.0074  \\
& dpay3     & 0.0317  & -0.0639 & -0.0697 & -0.0177 & -0.0373 & \cellcolor{gray!25}0.2896  & -0.0177 & -0.0044 \\
& dsmk2     & \cellcolor{gray!25}-0.2665 & -0.0117 & -0.0266 & -0.0126 & -0.0038 & 0.0444  & 0.0892  & 0.0320  \\
& dsmk3     & \cellcolor{gray!25}-0.5218 & -0.0133 & 0.0132  & 0.0255  & 0.0148  & -0.0546 & 0.0467  & 0.0333  \\ \hline
\parbox[t]{2mm}{\multirow{7}{*}{\rotatebox[origin=c]{90}{GP}}} & Female & -0.0662 & 0.0248 & \cellcolor{gray!25}-0.4417 & 0.0732 & 0.0368 & \cellcolor{gray!25}0.5999 & -0.4474 & -0.0260 \\
& Fellow & -0.0183 & -0.0958 & 0.0709 & 0.0418 & 0.2067 & 0.1019 & -0.1456 & -0.0108 \\
& Family planning & -0.0002 & -0.0154 & -0.1203 & 0.2229 & 0.0434 & 0.0360 & -0.0324 & -0.0118 \\
& Bulk-bill & -0.0210 & -0.0349 & 0.0416 & -0.0372 & -0.0617 &  0.0036 & 0.0509 & 0.0038 \\
& Age & 0.0086 & 0.0080 & \cellcolor{gray!25}0.0207 & -0.0061 & \cellcolor{gray!25}0.0175 & -0.0044 & 0.0093 & -0.0100 \\
& Australian graduate & 0.0839 & 0.0564 & -0.0087 & 0.3466 & 0.0911 & -0.2385 & -0.0965 & \cellcolor{gray!25}0.5515 \\
& Urban & -0.0888 & 0.0065 & 0.0706 & -0.0078 & 0.0099 & 0.0048 & -0.0222 & 0.1774 \\ \hline
\end{tabular}
}
\caption{Regression coefficient posterior mean estimates for the attributes of a female patient and the characteristics of a GP based on Model 2 for various products in the contraceptive discussion data.  Parameters whose 90\% credible interval does not include 0 are shown in grey.}
\label{table:estimation result for women attributes}
\end{table}

\section{Posterior mean of \texorpdfstring{$\bR$}{bReps} in the Australian GP data based on Model 2} \label{app:correlation}
\begin{equation*}
    \bR = \left[
    \begin{tabular}{cccccccc}
        { 1.0000} & -0.1126   & -0.0515   & -0.0450   & -0.2349   & { 0.4712} & -0.2065   & -0.0204   \\
        -0.1126   & { 1.0000} & { 0.1625} & { 0.0449} & -0.0263   & -0.2679   & -0.0537   & -0.0494   \\
        -0.0515   & { 0.1625} & { 1.0000} & { 0.5873} & { 0.1779} & { 0.0153} & { 0.1836} & { 0.0189} \\
        -0.0450   & { 0.0449} & { 0.5873} & { 1.0000} & { 0.2414} & { 0.0379} & { 0.1889} & { 0.1048} \\
        -0.2349   & -0.0263   & { 0.1779} & { 0.2414} & { 1.0000} & -0.0696   & { 0.5177} & -0.0771   \\
        { 0.4712} & -0.2679   & { 0.0153} & { 0.0379} & -0.0696   & { 1.0000} & -0.0055   & { 0.1831} \\
        -0.2065   & -0.0537   & { 0.1836} & { 0.1889} & { 0.5177} & -0.0055   & { 1.0000} & { 0.2058} \\
        -0.0204   & -0.0494   & { 0.0189} & { 0.1048} & -0.0771   & { 0.1831} & { 0.2058} & { 1.0000} \\
    \end{tabular}
    \right]
\end{equation*}

\section{Posterior mean of \texorpdfstring{$\Sig_{\ba}$}{Sigba} in the Australian GP data based on Model 2} \label{app:covariance}
\begin{equation*}
    \Sig_{\ba} = \left[
    \begin{tabular}{cccccccc}
        0.5574 & 0.3005 & { 0.2760} & { 0.2490} & { 0.0795} & 0.2056 & { 0.0634} & 0.2592 \\
        0.3005 & 0.6923 & { 0.3040} & { 0.2679} & { 0.1875} & 0.2199 & { 0.2418} & 0.3358 \\
        0.2760 & 0.3040 & { 1.3574} & { 0.2590} & -0.0188   & 0.1065 & { 0.2586} & 0.0751 \\
        0.2490 & 0.2679 & { 0.2590} & { 1.6084} & { 0.5244} & 0.2538 & -0.2229   & 0.2383 \\
        0.0795 & 0.1875 & -0.0188   & { 0.5244} & { 1.1040} & 0.2911 & { 0.0135} & 0.2612 \\
        0.2056 & 0.2199 & { 0.1065} & { 0.2538} & { 0.2911} & 1.5142 & { 0.2950} & 0.4906 \\
        0.0634 & 0.2418 & { 0.2586} & -0.2229   & { 0.0135} & 0.2950 & { 2.0530} & 0.4144 \\
        0.2592 & 0.3358 & { 0.0751} & { 0.2383} & { 0.2612} & 0.4906 & { 0.4144} & 1.2942 \\
    \end{tabular}
    \right]
\end{equation*}
\end{document}